\titleformat{\section}{\large\bfseries}{\thesection.}{1em}{}
\titleformat{\subsection}[runin]{\bfseries}{\thesubsection.}{1em}{}[.]
\def\ps@plain{%
      \let\@oddhead\@empty
      \def\@oddfoot{\normalfont\hfil{\bf\thepage}\hfil}%
      \def\@evenfoot{\normalfont\hfil{\bf\thepage}\hfil}
}
\newtheorem{theorem}{Theorem}
\newtheorem{proposition}{Proposition}
\newtheorem{lemma}{Lemma}
\newtheorem{assumption}{Assumption}
\newtheorem{definition}{Definition}
\theoremstyle{remark}
\newtheorem{remark}{Remark}
\newtheorem{exmp}{Example}
\newtheorem{algorithm}{Algorithm}
\def\qed{\rule{2mm}{2mm}}
\newcommand{\csum}[2]{
  \ifinner  \sum_{#1}^{#2}   \else   \displaystyle\sum\limits_{\substack{#1}}^{\substack{#2}}   \fi
}
\newcommand{\ccup}[2]{
  \ifinner  \bigcup_{#1}^{#2}   \else   \displaystyle\bigcup\limits_{\substack{#1}}^{\substack{#2}}   \fi
}
\newcommand{\ind}[1]{I{\left\{ #1 \right\}}}
\newcommand{\cond}{\;\middle\vert\;}
\newcommand{\expv}[1] {E\left( #1 \right)}
\newcommand{\pr}[1] {\mathrm{Pr}\left( #1 \right)}
\newcommand{\Rd}{R({\mathcal{G}_{d_i}})}
\newcommand{\V}[1]{V(#1)}
\newcommand{\R}[1]{R(#1)}
\newcommand{\cardgdi}{|\mathcal{G}_{d_i}|}
\newcommand{\cardfdi}{|\mathcal{F}_{d_i}|}
\def\firstpage{1}
\newcommand{\pgnum}[1]{\FPeval{\result}{clip(\firstpage+#1)} \result}
\begin{document}

\title{On Procedures Controlling the FDR for Testing Hierarchically Ordered Hypotheses}

\author{
Gavin Lynch \\
Catchpoint Systems, Inc., 228 Park Ave S $\sharp$28080 \\
New York, NY 10003, U.S.A.
\and
Wenge Guo\thanks{The research of
Wenge Guo was supported in part by NSF Grant
DMS-1309162.} \\
Department of Mathematical Sciences \\
New Jersey Institute of Technology \\
Newark, NJ 07102, U.S.A.
}

\date{}
\maketitle

\begin{abstract}
Complex large-scale studies, such as those related to microarray data and fMRI studies, often involve testing multiple hierarchically ordered hypotheses. However, most existing false discovery rate (FDR) controlling procedures do not exploit the inherent hierarchical structure among the tested hypotheses. In this paper, we first present a generalized stepwise procedure which generalizes the usual stepwise procedure to the case where each hypothesis is tested with a different set of critical constants. This procedure is helpful in creating a general framework under which our hierarchical testing procedures are developed. Then, we present several hierarchical testing procedures which control the FDR under various forms of dependence such as positive dependence and block dependence. Our simulation studies show that these proposed methods can be more powerful in some situations than alternative methods such as Yekutieli's hierarchical testing procedure (Yekutieli, \emph{JASA} \textbf{103} (2008) 309-316). Finally, we apply our proposed procedures to a real data set involving abundances of microbes in different ecological environments.
\end{abstract}


\noindent\textsc{Keywords}: {Block dependence, false discovery rate, hierarchical testing,
multiple testing, PRDS property, $p$-values, stepwise procedure.}

\vskip 30pt

\section*{Notation Index}
The following summarizes commonly used notation and lists where each symbol is found.

\noindent
    \begin{tabular}{cp{11cm}cc}
    \textbf{Symbol}      & \textbf{Description}                                          & Section & Page \\
    $\mathcal{M}, m$     & The set of tested hypotheses $\{H_1, \dots, H_m\}$ and its cardinality. & 2 & \pageref{NOTATION_M}    \\
    $\mathcal{M}_i, m_i$ & The set of descendant hypotheses of $H_i$ and its cardinality.  & 2       & \pageref{NOTATION_MI}    \\
    $\mathcal{D}_i, d_i$ & The set of ancestor hypotheses of $H_i$ and its cardinality, also referred to as its depth. & 2       & \pageref{NOTATION_DI}    \\
    $T(\cdot)$                  & A function that takes an index of a hypothesis and returns the index of its parent hypothesis. & 2       & \pageref{NOTATION_T}    \\
    $\mathcal{F}_d$    & The set of hypotheses with depth $d$, $\mathcal{F}_d = \{H_i : d_i = d\}$.             & 2       & \pageref{NOTATION_F}    \\
    $\mathcal{G}_d$    & The union of $\mathcal{F}_1, \dots, \mathcal{F}_d$.       & A.2       & \pageref{NOTATION_G} \\
    $D$                & The maximum depth of the hypotheses $\mathcal{G}_d \subseteq \mathcal{M}$ so that $\mathcal{G}_D = \mathcal{M}$.                                 & 2       & \pgnum{3}    \\
    $\ell$             & The total number of leaf hypotheses.                          & 2       & \pageref{NOTATION_L}    \\
    $\ell_i$           & The number of leaf hypotheses in set $\mathcal{M}_i$.         & 2       & \pageref{NOTATION_LI}    \\
    $R(\mathcal{A}), R$   & The number of rejected hypotheses belonging to any set $\mathcal{A}$ and $R = R(\mathcal{M})$. & 2    & \pageref{NOTATION_R}   \\
    $V(\mathcal{A}), V$   & The number of falsely rejected hypotheses belonging to any set $\mathcal{A}$ and $V = V(\mathcal{M})$. & 2    & \pageref{NOTATION_V}    \\
    $\alpha_i(\cdot)$         & The critical function for testing the $i^{th}$ hypothesis $H_i$.    & 2    & \pageref{NOTATION_CRIT_FUNC}    \\
    \end{tabular}
\pagebreak

\section{Introduction}
In many problems involving the testing of multiple hypotheses, the hypotheses have an intrinsic, hierarchical structure such as a tree-like or graphical structure. These hierarchical structures often arise in multiple testing problems involving clinical trials \citep{Mehrotra_2004, Dmitrienko_2007, Huque_2008}, genomics research \citep{Yekutieli_2006, Goeman_2008, Heller_2009, Guo_2010} and fMRI studies \citep{Benjamini_2007}. In general, hierarchical testing typically occurs while testing hierarchically structured hypotheses where, upon the rejection of one hypothesis, followup hypotheses are to be tested. For instance, \cite{Heller_2009} introduced a hierarchical testing approach for analyzing microarray data where individual genes were grouped into gene sets. The gene sets were tested and upon successfully rejecting a gene set, the associated individual genes were tested. \cite{Guo_2010} and \cite{Mehrotra_2004} used a similar hierarchical testing approach for time-course microarray data and clinical safety data, respectively. \cite{Benjamini_2007} used a hierarchical testing approach to study fMRI data where the brain was divided into brain regions and each brain region was tested for significance. If a brain region was significant, the voxels within the brain region were tested. In addition, \cite{Meinshausen_2008} introduced a hierarchical testing method for addressing the problem of variable selection in multiple linear regression models.

In the field of multiple testing, the problem of controlling the familywise error rate (FWER) for testing hierarchically ordered hypotheses has received considerable attention \citep{Dmitrienko_2006, Dmitrienko_2007, Goeman_2008, Huque_2008, Meinshausen_2008, Brechenmacher_2011, Goeman_2012}; however, the FWER control can be too conservative for large-scale multiple testing. There has been very few work towards developing general methods for testing hierarchically ordered hypotheses that control the false discovery rate (FDR), even though the FDR is a more appropriate error measure for large scale multiple testing. To our knowledge, only \cite{Yekutieli_2008} has provided a general method for testing hierarchically ordered hypotheses that is specifically intended for controlling the FDR. Yekutieli's procedure, which is based on the Benjamini-Hochberg (BH) procedure \citep{Benjamini_1995}, is only shown to control the FDR under independence. Some of the aforementioned procedures \citep{Mehrotra_2004, Benjamini_2007, Heller_2009, Guo_2010} can only be applied to special hierarchies consisting of only two layers.

In this paper, we propose new FDR controlling methods for testing hierarchically ordered hypotheses under various dependencies. Our approach towards controlling the FDR for testing hierarchically ordered hypotheses is different from that of Yekutieli's. First, to assist in the development of our hierarchical testing procedures, we introduce a new concept of generalized stepwise procedure, which generalizes the usual stepup, stepdown, and stepup-down procedures to the case where each hypothesis is tested with a different set of critical constants. The hypotheses are organized into different families according to their depth in the hierarchical structure. The formed families are sequentially tested by using the generalized stepwise procedures for which the corresponding critical constants take into account of the testing outcomes of higher-ranked families. Based on this approach, we were able to develop several new hierarchical testing procedures which control the FDR under various dependence structures including positive dependence and block dependence. To our knowledge, the procedures are the first procedures developed for testing hierarchically ordered hypotheses with proven control of the FDR under dependence structures other than independence. Furthermore, our simulation study shows that these procedure are quite powerful. The most powerful procedure, which we prove controls the FDR under positive block dependence, significantly outperforms Yekutieli's procedure in terms of power even though Yekutieli's procedure is only shown to control the FDR under independence, which is a special case of positive block dependence.

Another interesting finding of this research is that when the hierarchy takes on some special configurations, our procedures reduce to the existing FDR controlling procedures. For example, when there is no hierarchical structure, our proposed procedures reduce to the BH procedure and the Benjamini-Yekutieli (BY) procedure \citep{Benjamini_2001}. When the hierarchy takes on a fixed sequence structure, our procedures are equivalent to the fixed sequence procedures in \cite{Lynch_2014}. This shows that our procedures are the combination of stepwise and fixed sequence methods.

The rest of this paper is outlined as follows. In Section 2, we provide relevant notation and definitions that will be used throughout this paper. Section 3 presents our proposed generalized stepwise procedure. Section 4 presents our new hierarchical testing procedures with proven control of the FDR under various dependencies. Sections 5 and 6 present a simulation study and real data analysis where we compare our procedures with Yekutieli's procedure. Finally, Section 7 provides some brief discussions.

\begin{figure}
\centering
\mbox{\subfigure[]{\includegraphics[scale=.25, trim=0 175 0 100]{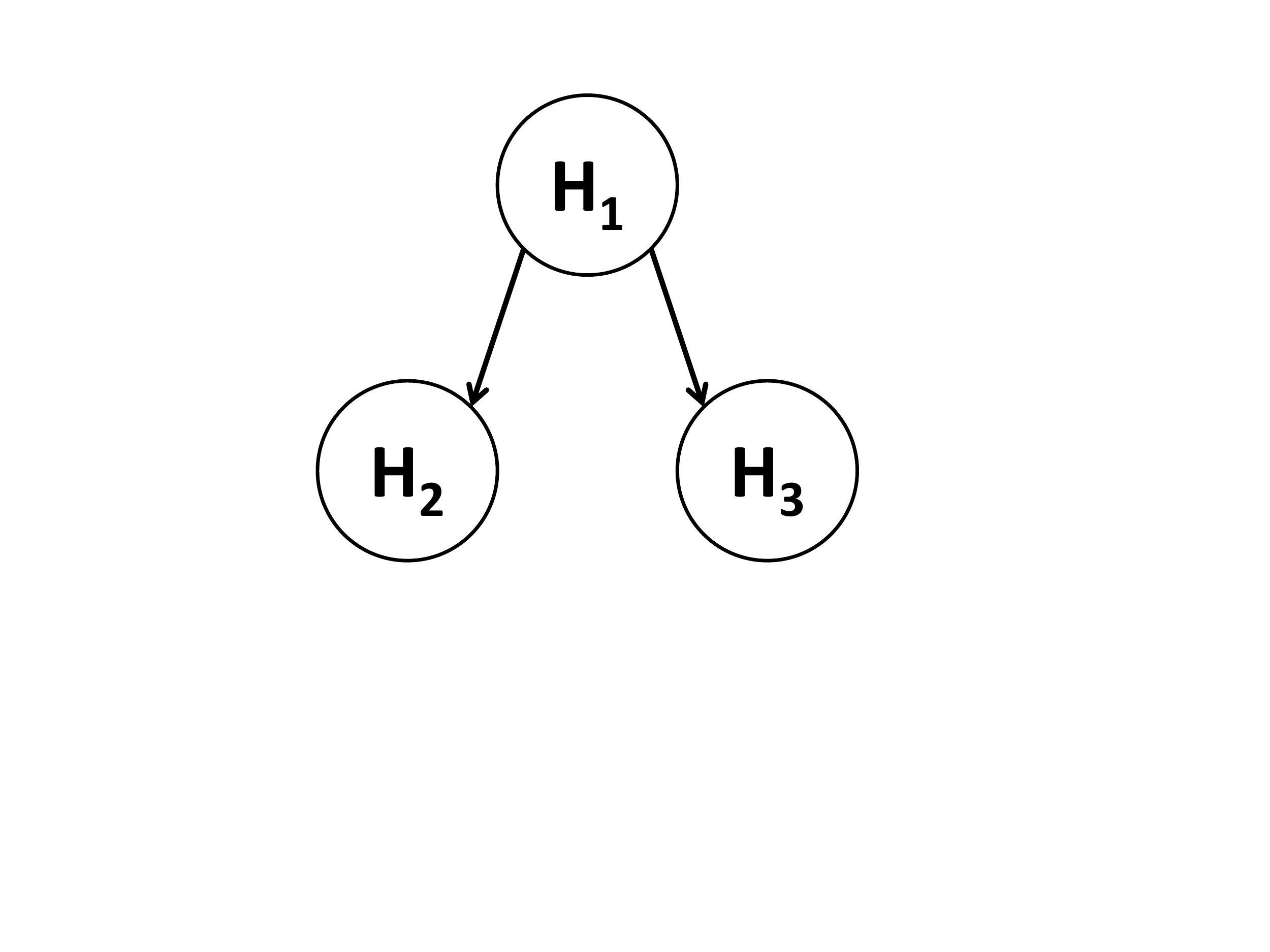}}\quad
\subfigure[]{\includegraphics[scale=.25, trim=0 50 0 100]{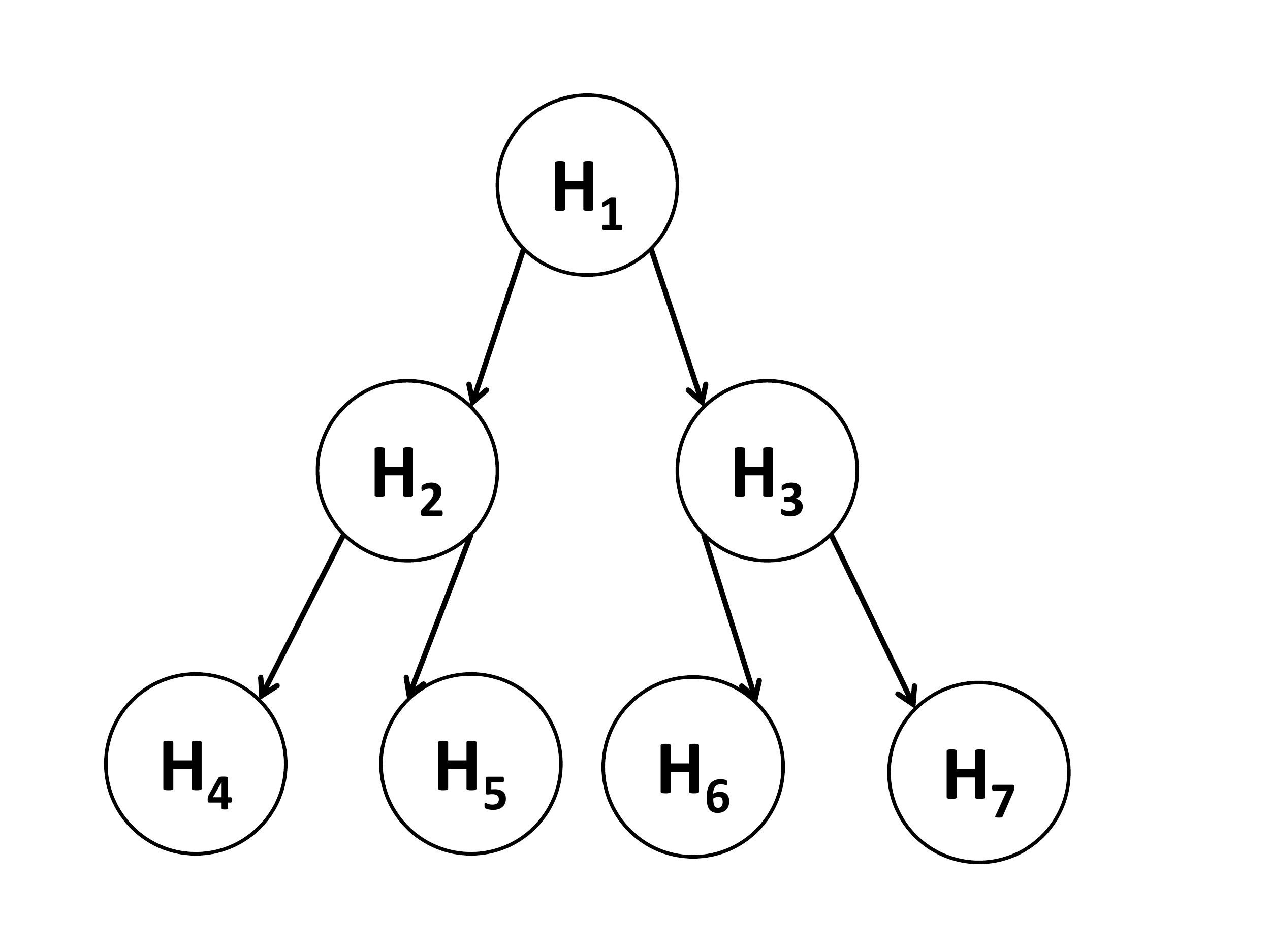}}}
\caption{(a) An example of a hierarchical structure with 3 hypotheses for which $H_2$ and $H_3$ are only tested if $H_1$ is rejected. (b) An example of a hierarchical structure with 7 hypotheses for which $H_2$ and $H_3$ are only tested if $H_1$ is rejected, $H_4$ and $H_5$ are only tested if $H_2$ is rejected, and $H_6$ and $H_7$ are only tested if $H_3$ is rejected.}
\label{IMG_EXAMPLE}
\end{figure}

\section{Preliminaries}

Suppose there are $m$ hypotheses $H_1, \dots, H_m$ to be tested that are organized hierarchically in a tree-like structure where each hypothesis can have several child hypotheses but at most one parent hypothesis. Let $\mathcal{M} = \{H_1, \dots, H_m\}$\label{NOTATION_M} be the set of the $m$ tested hypotheses. Let $T: \{0, \dots, m\} \rightarrow \{0, \dots, m\}$\label{NOTATION_T} be a function that takes an index of a hypothesis and returns the index of the parent hypothesis with $T(0) = 0$. That is, if $H_i$ has a parent hypothesis, its parent hypothesis is $H_{T(i)}$; otherwise $H_i$ does not have a parent hypothesis and $T(i) = 0$.  Define $T^0(i) = i$ and $T^k(i) = T(T^{k-1}(i))$ for any positive integer $k$. Let $\mathcal{D}_i = \{H_j : T^k(i) = j \text{ for } k = 0, \dots, m \}$\label{NOTATION_DI} so that $\mathcal{D}_i$ is the set of all ancestor hypotheses of $H_i$, which includes $H_i$. Let $d_i$ be the cardinality of $\mathcal{D}_i$, $d_i = |\mathcal{D}_i|$.
The depth of $H_i$ in the hierarchy is defined as $d_i$. Let $D$ be the maximum depth of the $m$ hypotheses to be tested. If $d_i = 1$, then $H_i$ does not have a parent hypothesis. Let $\mathcal{M}_i = \{H_j : T^k(j) = i \text{ for } k = 0, \dots, m \}$\label{NOTATION_MI} so that $\mathcal{M}_i$ is the set of all descendant hypotheses of $H_i$, which also includes $H_i$. We will refer to the hypotheses in set $\mathcal{M}_i$ as the subtree under $H_i$. Let $m_i$ be the cardinality of $\mathcal{M}_i$, $m_i = |\mathcal{M}_i|$. If $m_i = 1$, then $H_i$ has no children and it is referred to as a leaf hypothesis. We denote the number of leaf hypotheses in the whole hierarchy by $\ell$\label{NOTATION_L} and the number of leaf hypotheses in the subtree under $H_i$ by $\ell_i$. Formally, $\ell = \csum{H_j \in \mathcal{M}}{}\ind{m_j = 1}$ and $\ell_i = \csum{H_j \in \mathcal{M}_i}{}\ind{m_j = 1}$\label{NOTATION_LI}. Our procedures introduced in Section 4 group the hypotheses into $D$ families by depth where family $d$ contains all hypotheses with depth $d$, that is, $\mathcal{F}_d = \{H_i \in \mathcal{M} : d_i = d\}$\label{NOTATION_F}. For example, in Figure \ref{IMG_EXAMPLE}(a), $T(2) = T(3) = 1$ and $H_2$ and $H_3$ are leaf hypotheses. In Figure \ref{IMG_EXAMPLE}(b), $T(6) = T(7) = 3, \mathcal{D}_6 = \{H_1, H_3, H_6\}, \mathcal{M}_2 = \{H_2, H_4, H_5\}$, and $\mathcal{F}_3 = \{H_4, H_5, H_6, H_7\}$.

The hypotheses in the hierarchical structure are tested hierarchically by a testing procedure based on their corresponding $p$-values $P_1, \dots, P_m$. By hierarchical testing, we mean a hypothesis is only tested if its parent hypothesis has been rejected or it does not have a parent hypothesis. For any set $\mathcal{A} \subseteq \mathcal{M}$, define $\R{\mathcal{A}}$\label{NOTATION_R} and $\V{\mathcal{A}}$\label{NOTATION_V} to be the number of rejected hypotheses and falsely rejected hypotheses among the hypotheses in set $\mathcal{A}$, respectively. For example, $\R{\mathcal{M}}$ and $\V{\mathcal{M}}$ are the number of rejected hypotheses and falsely rejected hypotheses among all the $m$ tested hypotheses, respectively, and $\R{\mathcal{M}_i}$ and $\V{\mathcal{M}_i}$ are number of rejected hypotheses and falsely rejected hypotheses among the hypotheses in the subtree $\mathcal{M}_i$, respectively. For simplicity, often we will use $R$ and $V$ to denote $\R{\mathcal{M}}$ and $\V{\mathcal{M}}$, respectively. The FWER of this procedure is defined as $\pr{V > 0}$. The FDR of this procedure is defined as FDR = $\expv{V/R}$, where we use the convention that $V/R = 0$ when $R = 0$.
In addition, we will always use $|\mathcal{A}|$ to denote the cardinality of any set $\mathcal{A}$ throughout the paper.

Most existing multiple testing procedures are stepwise methods which are based on the ordered $p$-values $P_{(1)} \le \cdots \le P_{(m)}$ with corresponding hypotheses $H_{(1)}, \dots, H_{(m)}$. Typically the rejection thresholds of a stepwise procedure are based on a sequence of non-decreasing critical constants but in this paper, for convenience, we will instead test the hypotheses using a non-decreasing, non-negative function $\alpha_0: \{0, \dots, m+1\} \rightarrow \mathbb{R}$\label{NOTATION_CRIT_FUNC} called a critical function where $\alpha_0(0) = 0$. For example, the critical function of the BH procedure is $\alpha_0(r) = r\alpha/m$. A stepwise procedure first determines the number of rejections $R$ based on the critical function, then for each $i = 1, \dots, m$, it rejects $H_i$ if $P_i \le \alpha_0(R)$ and accepts $H_i$ if $P_i > \alpha_0(R)$. With $P_{(0)} \equiv 0$ and $P_{(m+1)} \equiv \infty$, a stepup procedure sets $R = \max\{0 \le r \le m: P_{(r)} \le \alpha_0(r)\}$. A stepdown procedure sets $R = \min\{1 \le r \le m+1: P_{(r)} > \alpha_0(r)\} - 1$. Finally, a stepup-down procedure of order $k$, which generalizes stepup and stepdown procedures, sets $R = \max\{0 \le r \le k-1: P_{(r)} \le \alpha_0(r)\}$ if $P_{(k)} > \alpha_0(k)$ and $R = \min\{k+1 \le r \le m+1: P_{(r)} > \alpha_0(r)\}-1$ if $P_{(k)} \le \alpha_0(k)$. When $k = m$, the stepup-down procedure reduces to the stepup proceudre and when $k = 1$, it reduces to the stepdown procedure. It should be noted that the event $\{P_{(r)} \le \alpha_0(r)\}$ is equivalent to the event $\{r \le \sum_{i=1}^{m}\ind{P_i \le \alpha_0(r)}\}$. Thus, the number of rejections can also be expressed by
\begin{equation}
R = \max\left\{0 \le r \le m: r \le \sum_{i = 1}^{m}\ind{P_i \le \alpha_0(r)}\right\} \label{EQN_STEPUP}
\end{equation}
for the stepup procedure,
\begin{equation}
R = \min\left\{1 \le r \le m+1 : r > \sum_{i = 1}^{m}\ind{P_i \le \alpha_0(r)}\right\} - 1 \label{EQN_STEPDOWN}
\end{equation}
for the stepdown procedure, and
\begin{eqnarray}
R =
\begin{cases}
\max\left\{0 \le r \le k-1: r \le \sum_{i = 1}^{m}\ind{P_i \le \alpha_0(r)}\right\} &\text{ if } k > \sum_{i = 1}^{m}\ind{P_i \le \alpha_0(k)} \\
\min\left\{k+1 \le r \le m+1 : r > \sum_{i = 1}^{m}\ind{P_i \le \alpha_0(r)}\right\} - 1 &\text{ if } k \le \sum_{i = 1}^{m}\ind{P_i \le \alpha_0(k)}
\end{cases} \label{EQN_STEPUPDOWN}
\end{eqnarray}
for the stepup-down procedure of order $k$.
Refer to \cite{Tamhane_1998} and \cite{Sarkar_2002} for further discussion on stepup-down procedure.

Throughout this paper we make use of the following basic assumption regarding marginal $p$-values:  for any $p$-value $P_i$ with $H_i$ being true,
\begin{equation}
\pr{P_i \le p} \le p ~~\text{ for any } 0 \le p \le 1. \label{EQN_UNIFORM}
\end{equation}
We consider several types of joint dependence throughout this paper: arbitrary dependence, positive dependence, and block dependence. Under arbitrary dependence, the $p$-values are not known to have any specific type of dependence structure. Positive dependence and block dependence are characterized by the following assumptions.

\begin{assumption} \label{ASM_POS_DEPENDENCE} \emph{Positive Dependence Assumption} \\
For any coordinatewise non-decreasing function of the $p$-values $\psi$,
\begin{equation}
\expv{\psi(P_1, \dots, P_m) \cond P_i \le p} \text{ is non-decreasing in $p$ for each $p$-value $P_i$ such that $H_i$ is true}.
\end{equation}
\end{assumption}

\begin{assumption} \label{ASM_BLOCK_DEPENDENCE} \emph{Block Dependence Assumption} \\
For each $d = 1, \dots, D$, the $p$-values corresponding to the hypotheses in $\mathcal{F}_d$ are independent of the $p$-values corresponding to the hypotheses not in $\mathcal{F}_d$.
\end{assumption}

Assumption \ref{ASM_POS_DEPENDENCE} is slightly more relaxed than the condition of positive regression dependence on a subset (PRDS) introduced in \cite{Benjamini_2001}. Assumption \ref{ASM_BLOCK_DEPENDENCE} only characterizes the joint dependence of the $p$-values across families but does not describe the joint dependence within families.

\section{Generalized Stepwise Procedure}

In order to present our hierarchical testing procedures in the next section, in this section, we present a new type of procedure termed as generalized stepwise procedure, including generalized stepup, stepdown, and stepup-down procedures, which generalizes the usual stepup, stepdown, and stepup-down procedures. In a non-hierarchical multiple testing problem where a stepwise procedure is used to test the hypotheses, the tested hypotheses often have the same importance and thus, it is natural to test those hypotheses with the same critical function, as shown in (\ref{EQN_STEPUPDOWN}). However, when the hypotheses have a hierarchical structure, the importance of a hypothesis depends on where it is located in the hierarchy. Hence, for a desired procedure,
each hypothesis should be tested with a different critical function that reflects its importance, and so we generalize the usual stepwise procedure as follows.

Given $m$ non-decreasing critical functions $\alpha_i(r), i = 1, \ldots, m$, our proposed generalized stepwise procedure rejects $H_i$ if $P_i \le \alpha_i(R)$ for each $i = 1, \dots, m$ where $R$ is determined as follows. For the generalized stepup procedure,
\begin{equation}
R = \max\left\{0 \le r \le m: r \le \sum_{i = 1}^{m}\ind{P_i \le \alpha_i(r)}\right\}, \label{EQN_GENERALIZED_SU}
\end{equation}
for the generalized stepdown procedure,
\begin{equation}
R = \min\left\{1 \le r \le m+1 : r > \sum_{i = 1}^{m}\ind{P_i \le \alpha_i(r)}\right\} - 1, \label{EQN_GENERALIZED_SD}
\end{equation}
and for the generalized stepup-down procedure of order $k$,
\begin{eqnarray}
R =
\begin{cases}
\max\left\{0 \le r \le k-1: r \le \sum_{i = 1}^{m}\ind{P_i \le \alpha_i(r)}\right\} &\text{ if } k > \sum_{i = 1}^{m}\ind{P_i \le \alpha_i(k)} \\
\min\left\{k+1 \le r \le m+1 : r > \sum_{i = 1}^{m}\ind{P_i \le \alpha_i(r)}\right\} - 1 &\text{ if } k \le \sum_{i = 1}^{m}\ind{P_i \le \alpha_i(k)}.
\end{cases} \label{EQN_GENERALIZED_SUSD}
\end{eqnarray}

It is easy to see that when $\alpha_i(r) = \alpha_0(r)$ for each $i = 1, \dots, m$, (\ref{EQN_GENERALIZED_SU}), (\ref{EQN_GENERALIZED_SD}), and (\ref{EQN_GENERALIZED_SUSD}) reduce to (\ref{EQN_STEPUP}), (\ref{EQN_STEPDOWN}), and (\ref{EQN_STEPUPDOWN}), respectively. Thus, the generalized stepup, stepdown, and stepup-down procedures reduce to the usual stepwise procedures, respectively. It should be noted that when $k = m$, (\ref{EQN_GENERALIZED_SUSD}) reduces to (\ref{EQN_GENERALIZED_SU}) and when $k = 1$, (\ref{EQN_GENERALIZED_SUSD}) reduces to (\ref{EQN_GENERALIZED_SD}).

The generalized stepwise procedure is fairly general and we present two examples to show its broad applicability.
\begin{exmp} \label{EXMP_WEIGHTED_PROCEDURE}
Consider a weighted multiple testing problem where $H_i$ has corresponding weight $w_i, i = 1, \dots, m$. A weighted stepwise procedure with the critical function $\alpha_0(r)$ tests $H_i$ based on weight-adjusted $p$-values $P_i/w_i$ instead of $P_i$. This is equivalent to a generalized stepwise procedure with the critical functions $\alpha_i(r) = w_i\alpha_0(r), i = 1, \ldots, m$ so that the weighted stepwise procedure can be regarded as a special case of the generalized stepwise procedure.
\end{exmp}
\begin{exmp} \label{EXMP_FIXED_SEQ_PROCEDURE}
Fixed sequence procedures assume the testing order of the hypotheses has been specified a-priori and that $H_i$ is not tested unless $H_1, \dots, H_{i-1}$ have all been rejected. \cite{Lynch_2014} showed that the fixed sequence procedure that rejects $H_i$ when $P_i \le m\alpha/(m-i+1)$ controls the FDR at level $\alpha$ under arbitrary dependence. This procedure is a special case of the generalized stepdown procedure with critical functions $\alpha_i(r) = \ind{r \ge i}m\alpha/(m-r+1)$. Other fixed sequence procedures can be defined similarly.
\end{exmp}

From (\ref{EQN_GENERALIZED_SUSD}), it can be seen that many of the familiar properties of stepwise procedures also hold for the generalized stepwise procedure. For example, the number of rejections $R$ is a coordinatewise non-increasing function of the $p$-values and $R$ is a non-decreasing function of $k$ (i.e. a stepup-down procedure of order $k$ rejects more hypotheses than a stepup-down procedure of order $k-1$). The most important property is a self-consistency property which allows us to express $R$ as
\begin{equation}
R = \csum{i=1}{m}\ind{P_i \le \alpha_i(R)}. \label{EQN_EXACT_SELF_CONSISTENT}
\end{equation}
(Blanchard and Roquain (2008) discussed a weaker self-consistency condition for the usual stepwise procedure with the critical function $\alpha_0(r)$, which is the inequality $R \le \csum{i=1}{m}\ind{P_i \le \alpha_0(R)}$). This property ensures that $R$ as determined in (\ref{EQN_GENERALIZED_SUSD}) is indeed the number of rejections by the generalized stepwise procedure. Thus, the event $\{H_i \text{ is rejected}\}$ can be expressed as $\{P_i \le \alpha_0(R)\}$ with $R$ being the number of rejections. To see why this property holds, let us define $\psi(r) = \sum_{i=1}^{m}\ind{P_i \le \alpha_i(r)}$. When $k > \psi(k)$, then $R = \max\{0 \le r \le k-1: r \le \psi(r)\}$ and if $k \le \psi(k)$, then $R+1 = \min\{k+1 \le r \le m+1: r > \psi(r)\}$. In either case, it is easy to see that $R \le \psi(R)$ and $R+1 > \psi(R+1)$. The fact that $\psi(R+1) < R+1$ implies $\psi(R+1) \le R$. Thus, $R = \psi(R)$ since $R \le \psi(R) \le \psi(R+1) \le R$.

To conclude this section, we present an efficient algorithm for finding the number of rejections by the generalized stepwise procedure. The algorithm is particularly useful when the number of hypotheses is very large.
\vspace{5pt}
\begin{algorithm} \label{PROC_FIND_R}
Given a positive integer $1 \le k \le m$ and critical functions $\alpha_i(\cdot), i = 1, \dots, m$, define $\psi(r) =\csum{i = 1}{m}\ind{P_i \le \alpha_i(r)}$.
\begin{enumerate}
\item Let $t = 1$ and $r_t = k$.
\item If $r_t > \psi(r_t)$, then
\begin{enumerate}
	\item Increase $t$ by 1 and set $r_{t} = \psi(r_{t-1})$.
	\item If $r_{t} \le \psi(r_{t})$, then let $R = r_{t}$ and stop; otherwise, if $r_{t} > \psi(r_{t})$, repeat step 2(a).
\end{enumerate}
\item Otherwise, if $r_t \le \psi(r_t)$, then
\begin{enumerate}
	\item Increase $t$ by 1 and set $r_{t} = \psi(r_{t-1})+1$.
	\item If $r_{t} > \psi(r_{t})$, then let $R = r_{t}-1$ and stop; otherwise, if $r_{t} \le \psi(r_{t})$, repeat step 3(a).
\end{enumerate}
\end{enumerate}
\end{algorithm}

\vspace{10pt}
\begin{proposition} \label{PROP_ALGORITHM}
The value of $R$ in (\ref{EQN_GENERALIZED_SUSD}) can be solved by algorithm \ref{PROC_FIND_R}.
\end{proposition}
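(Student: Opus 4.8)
The plan is to exploit the monotonicity of $\psi$ together with the two-case structure shared by (\ref{EQN_GENERALIZED_SUSD}) and Algorithm \ref{PROC_FIND_R}. First I would record the two facts that drive everything: since each $\alpha_i$ is non-decreasing, $\psi(r) = \sum_{i=1}^{m}\ind{P_i \le \alpha_i(r)}$ is a non-decreasing function of the integer argument $r$, and it is bounded, $0 \le \psi(r) \le m$. I would then note that the initial comparison $r_1 = k$ versus $\psi(k)$ that selects step 2 versus step 3 of the algorithm matches exactly the two branches $k > \psi(k)$ and $k \le \psi(k)$ of (\ref{EQN_GENERALIZED_SUSD}), so the two cases can be handled separately.

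Case 1 ($k > \psi(k)$, step 2). Here the algorithm sets $r_{t+1} = \psi(r_t)$ whenever it repeats, and the repeat condition is $r_t > \psi(r_t) = r_{t+1}$; hence the iterates are strictly decreasing. Since they stay $\ge 0$ (because $\psi \ge 0$), after finitely many steps the process reaches the first index $r_T$ with $r_T \le \psi(r_T)$, at which point $R$ is set to $r_T$. To identify $r_T$ with $\max\{0 \le r \le k-1 : r \le \psi(r)\}$, I would use monotonicity to show that every skipped value is a non-solution: for $r_{t+1} < r \le r_t$ we have $\psi(r) \le \psi(r_t) = r_{t+1} < r$, so $r \not\le \psi(r)$. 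Thus no integer in $(r_T, k]$ satisfies $r \le \psi(r)$, while $r_T$ itself does and $r_T \le \psi(k) \le k-1$; this makes $r_T$ the required maximum, which exists since $r = 0$ always qualifies.

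Case 2 ($k \le \psi(k)$, step 3). Symmetrically, the algorithm sets $r_{t+1} = \psi(r_t) + 1$ under the repeat condition $r_t \le \psi(r_t)$, so $r_{t+1} \ge r_t + 1$ and the iterates are strictly increasing; they stay $\le m+1$ because $\psi \le m$, so the process terminates at the first $r_T$ with $r_T > \psi(r_T)$, giving $R = r_T - 1$. Again by monotonicity, every integer $r$ in the range $[r_t,\, r_{t+1}-1] = [r_t,\, \psi(r_t)]$ satisfies $\psi(r) \ge \psi(r_t) \ge r$, hence fails $r > \psi(r)$; consequently every integer in $[k,\, r_T - 1]$ fails, $r_T$ succeeds, and $r_T \ge k+1$. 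Therefore $r_T = \min\{k+1 \le r \le m+1 : r > \psi(r)\}$ (the min exists since $r = m+1$ always works, as $\psi(m+1) \le m$), and $R = r_T - 1$ matches (\ref{EQN_GENERALIZED_SUSD}).

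The monotonicity and boundedness of $\psi$ are routine; the only real content is the bookkeeping claim that the integers the algorithm jumps over are precisely the non-solutions of the relevant inequality, which is where I expect to spend the most care — in particular getting the half-open interval endpoints right so that the boundary iterates $r_t$ (where the repeat condition holds) and the terminal $r_T$ (where it fails) are each classified correctly. Termination is then immediate, since strict monotonicity of the iterate sequence together with the bounds $0$ and $m+1$ forces the process to stop after finitely many steps.
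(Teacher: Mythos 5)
Your proposal is correct and follows essentially the same route as the paper's proof: split on the sign of $k - \psi(k)$, use monotonicity of $\psi$ to get strict monotonicity (hence termination) of the iterates, and identify the terminal iterate with the required max or min. The only cosmetic difference is that you show every skipped integer fails the relevant inequality, whereas the paper proves the contrapositive by induction (every qualifying $r$ bounds all iterates from below/above); the two arguments are logically interchangeable.
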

\begin{proof}
The proof is in the Appendix.
\end{proof}

\section{Hierarchical FDR Control}

In this section, we describe our procedure to test hierarchically ordered hypotheses. The tested hypotheses are arranged into $D$ families, $\mathcal{F}_1, \dots, \mathcal{F}_D$, where $\mathcal{F}_d$ is the family of hypotheses with depth $d$. Given $m$ non-decreasing critical functions $\alpha_i(r), i=1, \ldots, m$, the hypotheses are tested as follows.

\begin{definition} \label{DEF_HIER} \emph{General Hierarchical Testing Procedure}
\begin{enumerate}
\item Test $\mathcal{F}_1$ by using the generalized stepup procedure with critical functions $\alpha_i(r), H_i \in \mathcal{F}_1$. Let $\mathcal{S}_1$ be the set of rejected hypotheses and $R(\mathcal{F}_1)$ be the number of rejected hypotheses in $\mathcal{F}_1$. Test $\mathcal{F}_2$.
\item To test $\mathcal{F}_d$, use the generalized stepup procedure with critical functions $$\alpha^*_i (r) = \ind{H_{T(i)} \text{ is rejected}} \allowbreak \alpha_i \left(r + \sum_{j=1}^{d-1}R(\mathcal{F}_{j}) \right), ~~H_i \in \mathcal{F}_d.$$ Let $\mathcal{S}_d$ be the set of rejected hypotheses and $R(\mathcal{F}_d)$ be the number of rejected hypotheses in $\mathcal{F}_d$. Test $\mathcal{F}_{d+1}$.
\item The set of rejected hypotheses is $\bigcup_{d=1}^{D}\mathcal{S}_d$ and the total number of rejections is $R = \sum_{d = 1}^{D}R(\mathcal{F}_d)$.
\end{enumerate}
\end{definition}

The above procedure is termed as a hierarchical testing procedure since the procedure will accept any hypothesis whose parent hypothesis has been accepted. This can be seen in the construction of the critical functions in step 2 where $\alpha_i^*(r) = 0$ if $H_i$'s parent hypothesis $H_{T(i)}$ has not been rejected so that $H_i$ cannot be rejected. It should be noted that the parents of all hypotheses in $\mathcal{F}_d$ are in $\mathcal{F}_{d-1}$, which is tested before testing $\mathcal{F}_d$. Hence, for each $H_i \in \mathcal{F}_d$, the event $\{H_{T(i)} \text{ is rejected}\}$ is observed by the time $\mathcal{F}_d$ is tested.

\vspace{5pt}
\begin{remark}
In Definition 1, when all the hypotheses in $\mathcal{F}_d$ have the same critical functions and every $H_i \in \mathcal{F}_d$ can be tested (i.e. $H_{T(i)}$ is rejected), the generalized stepup procedure used for testing $\mathcal{F}_d$ reduces to the usual stepup procedure. However, our critical functions for testing hierarchically ordered hypotheses, which are presented in the next subsections, are not the same and depend on where the hypothesis is located in the hierarchy. Furthermore, since the hypotheses in $\mathcal{F}_d$ may not have the same parent, the parent hypotheses $H_{T(i)}$ could be rejected for some, but not all, of $H_i \in \mathcal{F}_d$. Hence, only in an uncommon case does the generalized stepup procedure reduce to the usual stepup procedure for testing $\mathcal{F}_d$.
\end{remark}

Following (\ref{EQN_EXACT_SELF_CONSISTENT}), the hierarchical testing procedure has the following self-consistency property in each family $\mathcal{F}_d$,
\[
\R{\mathcal{F}_d} = \csum{H_i \in \mathcal{F}_d}{}\ind{P_i \le \alpha_i^*(\R{\mathcal{F}_d})}, d= 1, \dots, D,
\]
where $\alpha^*_i (r) = \ind{H_{T(i)}\text{ is rejected}}\alpha_i(r + \sum_{j=1}^{d-1}R(\mathcal{F}_{j}))$. Hence, the event $\{H_i \text{ is rejected}\}$ is equivalent to the event $ \{H_{T(i)}\text{ is rejected}, P_i \le \alpha_i(\sum_{j=1}^{d_i}R(\mathcal{F}_j)) \}$, where $\sum_{j=1}^{d_i}R(\mathcal{F}_j)$ is the number of rejections in the first $d_i$ families, $\mathcal{F}_1, \ldots, \mathcal{F}_{d_i}$. This property will be useful to prove the FDR control of our procedures.

Now that we have defined our hierarchical testing procedure, we will consider various dependence structures, such as positive dependence, arbitrary dependence, and block dependence, and develop newer hierarchical testing procedures which control the FDR under these dependence structures. The proofs of all the theorems in this section are in the appendix.

\subsection{Procedure under Positive Dependence}
We first consider positive dependence structure. Positive dependence has received much attention in multiple testing due to the fact that several popular multiple testing procedures have been developed under this type of dependence (see \cite{Sarkar_1998, Benjamini_2001, Sarkar_2002, Guo_2012}). Our procedure under positive dependence is as follows.

\begin{theorem} \label{THM_HIER_FDR_POS_DEP} FDR Control under Positive Dependence\\
Under Assumption \ref{ASM_POS_DEPENDENCE}, the hierarchical testing procedure with critical functions
\[
\alpha_i(r) = \frac{\ell_i \alpha}{\ell}\frac{m_i + r-1}{m_i}, ~~i = 1, \ldots, m,
\]
strongly controls the FDR at level $\alpha$.
\end{theorem}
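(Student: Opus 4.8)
The plan is to bound the FDR by decomposing it across the true null hypotheses and then applying the Benjamini--Yekutieli positive-dependence technique to each term, with the hierarchical leaf structure doing the final bookkeeping. First I would write the FDR as $\expv{V/R} = \sum_{H_i \text{ true}} \expv{\ind{H_i \text{ rejected}}/R}$, using the convention $V/R = 0$ on $\{R = 0\}$. The self-consistency property recorded just after Definition~\ref{DEF_HIER} lets me replace the event $\{H_i \text{ rejected}\}$ by $\{H_{T(i)} \text{ rejected}\}\cap\{P_i \le \alpha_i(R_{(d_i)})\}$, where $R_{(d_i)} := \sum_{j=1}^{d_i}\R{\mathcal{F}_j}$, while crucially keeping the \emph{global} count $R$ in the denominator. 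This coupling between a threshold that depends only on $R_{(d_i)}$ and a denominator equal to the total $R$ is exactly what the specific form of $\alpha_i$ is engineered to exploit.

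Second, I would record the two elementary facts about $\alpha_i$ that drive the argument: (i) $\alpha_i$ is non-decreasing and $\alpha_i(r)/r = \frac{\ell_i\alpha}{\ell m_i}\big(1 + (m_i-1)/r\big)$ is non-increasing in $r$; and (ii) when $H_i$ is rejected, the global count satisfies $R \ge R_{(d_i)} + \R{\mathcal{M}_i} - 1$, since the rejections counted in the first $d_i$ families and those counted in the subtree $\mathcal{M}_i$ overlap only in $H_i$ itself (every proper descendant of $H_i$ sits at depth strictly greater than $d_i$). Fact (ii) is the quantitative statement that the $(m_i+r-1)/m_i$ inflation of the threshold is paid for by the extra subtree rejections that enlarge $R$.

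Third, to control the dependence I would invoke Assumption~\ref{ASM_POS_DEPENDENCE} in the manner of Benjamini and Yekutieli. Because the rejection counts are coordinatewise non-increasing in the $p$-values (a property inherited from the generalized stepup procedure), each indicator $\ind{R \le r}$ is coordinatewise non-decreasing, so Assumption~\ref{ASM_POS_DEPENDENCE} makes $\pr{R \le r \cond P_i \le p}$ non-decreasing in $p$. Expanding $\expv{\ind{H_i \text{ rejected}}/R}$ as a layer cake over the possible values of $R$ and summing by parts---using fact (i) to handle the varying threshold $\alpha_i(R_{(d_i)})$---bounds each true null's contribution by a term proportional to its leaf weight $\ell_i/\ell$.

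The main obstacle, and the step I would spend the most care on, is the final aggregation. Summing the per-hypothesis bounds naively fails: if a true null $H_i$ is an ancestor of another true null $H_j$, then the leaf weights $\ell_i$ and $\ell_j$ both charge the leaves of $\mathcal{M}_j$, and $\sum_{H_i \text{ true}}\ell_i$ can exceed $\ell$. The resolution must come from fact (ii) together with the $m_i$ in the denominator of $\alpha_i$: the inflated threshold is compensated by the larger denominator $R \ge R_{(d_i)} + \R{\mathcal{M}_i} - 1$, so that the leaf charges telescope along each root-to-leaf path and the total is governed by the \emph{disjoint} leaf count $\ell$ rather than by $\sum \ell_i$. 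Making this telescoping precise---showing that the contributions, once organized by subtree, collapse to $\frac{\alpha}{\ell}\cdot(\text{number of leaves}) \le \alpha$---is the crux. As a sanity anchor, in the degenerate case with no hierarchy one has $m_i = \ell_i = 1$, $d_i = 1$, and $\ell = m$, so $\alpha_i(r) = r\alpha/m$ and the whole argument collapses exactly to the Benjamini--Yekutieli proof of the BH procedure under positive dependence.
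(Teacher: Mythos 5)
You have correctly assembled most of the local ingredients---the self-consistency property, the inequality $R \ge \sum_{j=1}^{d_i}\R{\mathcal{F}_j} + \R{\mathcal{M}_i} - 1$ (which is exactly the paper's inequality, justified the same way via $\mathcal{G}_{d_i}\cap\mathcal{M}_i = \{H_i\}$), and the Benjamini--Yekutieli-style use of Assumption \ref{ASM_POS_DEPENDENCE}. But the step you yourself flag as the crux---aggregating the per-true-null bounds when true nulls sit on a common root-to-leaf path---is a genuine gap, and the ``telescoping along paths'' you propose is not an argument; it is a restatement of what needs to be proved. The difficulty is real: with the decomposition $V = \sum_{H_i\ \text{true}}\ind{H_i \text{ rejected}}$, the per-hypothesis bounds carry weights $\ell_i/\ell$ that do not sum to $1$ over the true nulls, and nothing in facts (i)--(ii) as you state them cancels the overcounting. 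Moreover, the algebraic identity that actually drives the proof is not your fact (i) (monotonicity of $\alpha_i(r)/r$) but the exact identity $\frac{m_i}{m_i+r-1}\,\alpha_i(r) = \frac{\ell_i\alpha}{\ell}$, a constant in $r$; this constant only emerges when the numerator being bounded is $m_i$, i.e., when you are controlling the false rejections of an \emph{entire subtree} at once, not a single indicator.

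The missing idea is to change the decomposition: write $V = \sum_{i:T(i)=0}\V{\mathcal{M}_i}$ over the \emph{root} hypotheses (whose subtrees partition $\mathcal{M}$, so $\sum_{i:T(i)=0}\ell_i = \ell$ exactly and no double-counting can occur), and prove
\[
\expv{\frac{\V{\mathcal{M}_i}}{R}} \le \frac{\ell_i\alpha}{\ell} \quad\text{for every } i,
\]
by induction down the tree. When $H_i$ is false you recurse, since $\V{\mathcal{M}_i}=\sum_{j:T(j)=i}\V{\mathcal{M}_j}$ and $\sum_{j:T(j)=i}\ell_j=\ell_i$. When $H_i$ is true you \emph{stop recursing} and bound the whole subtree in one step:
\[
\frac{\V{\mathcal{M}_i}}{R} \le \frac{\V{\mathcal{M}_i}}{\V{\mathcal{M}_i}+R-\R{\mathcal{M}_i}} \le \frac{m_i}{m_i+R-\R{\mathcal{M}_i}}\ind{H_i \text{ rejected}} \le \frac{m_i}{m_i+R(\mathcal{G}_{d_i})-1}\ind{H_i \text{ rejected}},
\]
using that $x/(x+c)$ is increasing in $x$ and $\V{\mathcal{M}_i}\le m_i$; then the identity above together with the positive-dependence lemma (applied to $\Gamma = R(\mathcal{G}_{d_i})$, summing $\pr{R(\mathcal{G}_{d_i})=r\cond P_i\le\alpha_i(r)}$ over $r$ from $d_i$ to $|\mathcal{G}_{d_i}|$) yields $\ell_i\alpha/\ell$. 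Descendant true nulls of a true $H_i$ never generate separate terms---they are absorbed into the crude bound $\V{\mathcal{M}_i}\le m_i$---which is precisely how the overcounting you worried about is avoided. Your sanity check in the non-hierarchical case is fine, but it does not exercise the problematic step, since there every subtree is a singleton.
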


Consider the special case when there is no hierarchical ordering (i.e. $\ell = m$ and $\ell_i = m_i = 1$) so that the hypotheses do not have any pre-defined structure and the problem reduces to a non-hierarchical multiple testing problem. We will refer to this configuration as the non-hierarchical configuration. Under this configuration, all the hypotheses belong to the same family, $\mathcal{F}_1$, so that the hierarchical testing procedure reduces to the usual stepup procedure which can be further reduced to a normal stepup procedure since all the critical functions are equal to $r\alpha/m$. Hence, the procedure reduces to the stepup procedure with critical function $r\alpha/m$, which is the BH procedure. Thus, our result generalizes the BH to the testing of hierarchically ordered hypotheses.

Now, we consider another special case where each family $\mathcal{F}_i$ has exactly one hypothesis $H_i, i=1, \ldots, m$. Thus, the tested hypotheses $H_1, \ldots, H_m$ are pre-ordered, $\ell = 1$, and $m_i = m-i+1$.
We will refer to this configuration as the fixed sequence configuration. Under this configuration, the hierarchical testing procedure reduces to the fixed sequence method introduced in \cite{Lynch_2014}, where hypothesis $H_i$ is rejected if, and only if, hypotheses $H_1, \dots, H_{i-1}$ have all been rejected and $P_i \le m\alpha/(m-i+1)$. This method is shown to control the FDR at level $\alpha$ under arbitrary dependence. Thus, our result also generalizes the fixed sequence procedure to the testing of hierarchically ordered hypotheses.

Remarkably, our result has connected two opposing testing methods: the testing of non-ordered hypotheses (through the BH procedure) and the testing of fully ordered hypotheses (through the fixed sequence procedure).

Finally, we consider a third configuration which we call the binary tree configuration. This configuration is helpful for evaluating the critical functions in the hierarchical setting and it is defined as follows. There is one hypothesis in $\mathcal{F}_1$ and each hypothesis has two child hypotheses except for the leaf hypotheses in $\mathcal{F}_D$. Hence, $\ell = 2^{D-1}$ and $m = 2^D-1$. For each $d = 1, \dots, D$, there are $2^{d-1}$ hypotheses in $\mathcal{F}_d$ and for each $H_i \in \mathcal{F}_d$, $\ell_i = 2^{D-d}$ and $m_i = 2^{D-d+1}-1$. Under this configuration, the critical functions of Theorem \ref{THM_HIER_FDR_POS_DEP} are, after simplification,
\begin{equation}
\alpha_i(r) = \frac{\alpha}{2^{d-1}}\left(1+\frac{r-1}{2^{D-d+1}-1}\right), ~~H_i \in \mathcal{F}_d, d = 1, \dots, D. \label{EQN_BINARY_TREE_POS}
\end{equation}

Compared to Meinshausen's FWER controlling hierarchical testing procedure, which is equivalent to the hierarchical testing procedure with critical functions $\ell_i\alpha/\ell, i = 1, \dots, m$, the critical functions of Theorem \ref{THM_HIER_FDR_POS_DEP} are $(m_i+r-1)/m_i$ times larger for $H_i$. In the binary tree configuration, Meinshausen's critical function for $H_i$ in $\mathcal{F}_d$ is $\alpha_i(r) = \alpha/2^{d-1}$, which is $(1 + (r-1)/(2^{D-d+1}-1))$ times smaller than the critical function in Theorem \ref{THM_HIER_FDR_POS_DEP}. Table \ref{TBL_POS_DEP} lists the critical functions of Theorem \ref{THM_HIER_FDR_POS_DEP} and Meinshausen's procedure for testing the hypotheses in Figure \ref{IMG_EXAMPLE}(b) which has the binary tree configuration. For family $d$, only the values of $r$ between $d$ and $\sum_{j=1}^{d}|\mathcal{F}_j|$ are listed in Table \ref{TBL_POS_DEP} due to the fact that if a hypothesis in family $d$ is rejected, then all $d$ of its ancestor hypotheses including the hypothesis itself are rejected so that $d \le \sum_{j=1}^{d}R(\mathcal{F}_j) \le \sum_{j=1}^{d}|\mathcal{F}_j|$.

\begin{table}
    \begin{center}
    \caption{A comparison of critical functions for the procedure in Theorem \ref{THM_HIER_FDR_POS_DEP} and Meinshausen's procedure when testing the hypotheses in Figure \ref{IMG_EXAMPLE}(b).}
    \vskip 5pt
    \begin{tabular}{|c|c||c|c|c|c|c|c|c||c|}
    	\hline
        ~ & ~ & \multicolumn{7}{|c||}{Theorem \ref{THM_HIER_FDR_POS_DEP}} & Meinshausen  \\ \noalign{\hrule height 2pt}
        ~ & ~ & $r=1$ & $r=2$ & $r=3$ & $r=4$ & $r=5$ & $r=6$ & $r=7$ &   \\ \noalign{\hrule height 2pt}
        Family 1 & $\alpha_i(r)$ & $\alpha$ & -   &  -  &  -  & -   & -   &  - & $\alpha$ \\ \hline
        Family 2 & $\alpha_i(r)$ & -   & $2\alpha/3$   &  $5\alpha/6$  &  -  & -   & -   & - & $\alpha/2$ \\ \hline
        Family 3 & $\alpha_i(r)$ & -   & - & $ 3\alpha/4$  &  $\alpha$  &  $5\alpha/4$  & $3\alpha/2$   & $7\alpha/4$ & $\alpha/4$ \\ \hline
    \end{tabular} \label{TBL_POS_DEP}
    \end{center}
\end{table}

\normalsize

\begin{remark}
It should be noted that the hierarchical testing procedure relies on the generalized stepup procedure to test each family; however, our proof of the FDR control for the procedure in Theorem \ref{THM_HIER_FDR_POS_DEP} (and our proof of FDR control for the remaining procedures in this section) still holds if the generalized stepup-down procedure of any arbitrary order (including the generalized stepdown procedure) is used to test each family. Nevertheless, in practice we are generally trying to maximize the number of rejections subject to the FDR control. Since with the same critical functions, the generalized stepup procedure is more powerful than the corresponding generalized stepup-down and generalized stepdown procedures, we opted to use the generalized stepup procedure to test each family.
\end{remark}

\subsection{Procedure under Arbitrary Dependence}
In this subsection we introduce a FDR controlling hierarchical testing procedure under arbitrary dependence. Since arbitrary dependence is a more general type of joint dependence than positive dependence, it follows that the procedure under arbitrary dependence will not be quite as powerful as the procedure from Theorem \ref{THM_HIER_FDR_POS_DEP}.

\begin{theorem} \label{THM_HIER_FDR_ARB_DEP} FDR Control under Arbitrary Dependence\\
The hierarchical testing procedure with critical functions
\[
\alpha_i(r) = \dfrac{\ell_i \alpha}{\ell}\frac{m_i + r-1}{m_i}\frac{1}{c_i}, \hspace{10pt} \text{where} \hspace{5pt}  c_i = 1 + \csum{j = d_i}{|\mathcal{G}_{d_i}|-1}1/ (m_i + j),
\]
for $i=1, \ldots, m$, strongly controls the FDR at level $\alpha$ under arbitrary dependence.
\end{theorem}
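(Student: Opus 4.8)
The plan is to mirror the proof of Theorem \ref{THM_HIER_FDR_POS_DEP}, replacing the single place where the positive-dependence property (Assumption \ref{ASM_POS_DEPENDENCE}) is invoked by a distribution-free summation. The extra normalizer $c_i$ is exactly the price of that replacement, in the same way that a harmonic factor distinguishes the BY procedure from BH. The only property of the true-null $p$-values I will use is the superuniformity bound (\ref{EQN_UNIFORM}), so the resulting argument holds under arbitrary dependence. I would begin from $\expv{V/R} = \csum{i:H_i\text{ true}}{}\expv{\ind{H_i\text{ rejected}}/R}$ and, for each true null $H_i$, invoke the self-consistency identity recorded after Definition \ref{DEF_HIER}: the event $\{H_i\text{ rejected}\}$ coincides with $\{H_{T(i)}\text{ rejected},\,P_i\le\alpha_i(\R{\mathcal{G}_{d_i}})\}$. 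Since rejecting $H_i$ forces all $d_i$ of its ancestors (each of depth at most $d_i$) to be rejected, the cumulative count $\R{\mathcal{G}_{d_i}}$ takes values in $\{d_i,\dots,|\mathcal{G}_{d_i}|\}$ on this event, and this is precisely the index range of the sum defining $c_i$ — a first sign that the normalization has been tailored to the calculation.

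The decisive step is to decouple $P_i$ from the rejection count. Following the notation $\Pdi$ and $\Rdi$, I would set the $p$-values of the entire subtree $\mathcal{M}_i$ to $0$, so that $H_i$ and all its descendants are automatically rejected and the resulting count $\Rdi$ no longer depends on $P_i$. Using the coordinatewise monotonicity of the number of rejections together with the self-consistency property (\ref{EQN_EXACT_SELF_CONSISTENT}), the event $\{H_i\text{ rejected},\,\R{\mathcal{G}_{d_i}}=k\}$ can be rewritten as $\{P_i\le\alpha_i(k)\}\cap\{\Rdi=k\}$, where the second event depends only on $\Pdi$. Conditioning on $\Pdi$ and integrating out $P_i$ then uses nothing beyond (\ref{EQN_UNIFORM}), which is what renders the bound dependence-free. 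A summation by parts over the admissible values of $k$ should convert the contribution of $H_i$ into the (constant) increment of the affine critical function $\alpha_i$ times the harmonic sum $1+\csum{j=d_i}{|\mathcal{G}_{d_i}|-1}1/(m_i+j)$, which is exactly $c_i$; the factor $1/c_i$ built into $\alpha_i$ is present solely to absorb this loss. Crucially, I would keep the \emph{global} count $R$ in the denominator throughout, since the rejections occurring strictly below depth $d_i$ (inside the subtrees) enlarge $R$ and must be credited, not discarded.

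Finally I would sum these per-null contributions over all true nulls and collapse them to $\alpha$ using leaf bookkeeping — the identity $\csum{H_c\text{ a child of }H_i}{}\ell_c=\ell_i$ telescoping up to the total leaf count $\ell$ — which is most cleanly phrased through the weighted rejection counts $\Vw{\cdot}$ and $\Rw{\cdot}$. The step I expect to be hardest is exactly this joint denominator-and-weight accounting. The ratio $V/R$ carries the global count $R$, whereas the threshold of $H_i$ is governed by the local count $\R{\mathcal{G}_{d_i}}\le R$, and one cannot afford the crude replacement $1/R\le 1/\R{\mathcal{G}_{d_i}}$ followed by a term-by-term superuniformity bound: this is already too lossy in the fixed-sequence configuration (each family a single hypothesis, where $c_i=1$ and no correction should be needed), in which it would manufacture a spurious harmonic factor. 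The real work is therefore to exploit that the rejected hypotheses always form subtrees hanging from the roots — so that the false discoveries are highly structured rather than independent — and to carry out the leaf-weighted aggregation in tandem with retaining the global $R$, so that the only harmonic loss left over is precisely the one the definition of $c_i$ is built to cancel.
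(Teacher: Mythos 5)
Your overall intuition is right --- the harmonic constant $c_i$ does come from an Abel/telescoping summation over the admissible values $r \in \{d_i,\dots,|\mathcal{G}_{d_i}|\}$ of the cumulative count $\Rd$, exactly as the BY constant arises from BH --- but your ``decisive step'' is where the argument breaks. You propose to decouple $P_i$ from the rejection count (via $\Pdi$, $\Rdi$ and zeroing out the subtree's $p$-values), then condition on $\Pdi$ and integrate $P_i$ out using only the marginal bound (\ref{EQN_UNIFORM}). Under arbitrary dependence this is invalid: superuniformity is a marginal property, and $\pr{P_i \le p \cond \Pdi} \le p$ simply does not follow from $\pr{P_i \le p} \le p$ when $P_i$ may depend arbitrarily on the other $p$-values. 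Conditioning-then-integrating is precisely the move that is only licensed by the independence-across-families assumption in Theorems \ref{THM_HIER_FDR_BLOCK_POS} and \ref{THM_HIER_FDR_BLOCK_ARB}. The paper's proof never decouples: Lemma \ref{LEMMA_ARBITRARY_DEPENDENCE} bounds $\sum_r \pr{\Gamma = r,\, P_i \le t(r)}/t(r)$ by $\sum_r (t(r)-t(r-1))/t(r)$ working directly with the joint events and only the marginal bound (\ref{EQN_UNIFORM}), for an \emph{arbitrary} discrete function $\Gamma$ --- no monotonicity, no independence, no conditioning. Applied with $\Gamma = \Rd$ and $t(r) = \alpha_i(r)$, that telescoping sum is exactly $c_i$, so the decoupling you rely on is both unnecessary and the one step that cannot be justified at the claimed level of generality. (A secondary slip: setting $P_i = 0$ does not ``automatically'' reject $H_i$ in the hierarchical procedure unless its parent is rejected, so your claimed event factorization also needs the parent-rejection indicator retained.)

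The second gap is the part you yourself flag as ``the real work'' but do not carry out: reconciling the global denominator $R$ with the local count $\Rd$, and aggregating over true nulls. Your decomposition $\text{FDR} = \sum_{i:\,H_i \text{ true}} \expv{\ind{H_i \text{ rejected}}/R}$ with per-term bounds of order $\ell_i\alpha/\ell$ does not telescope to $\alpha$, since $\sum_{i \text{ true}} \ell_i$ can greatly exceed $\ell$ when true nulls are nested. The paper instead proves the subtree bound (\ref{EQN_POS_DEP_1}), $\expv{\V{\mathcal{M}_i}/R} \le \ell_i\alpha/\ell$, by induction from the leaves up: for a true $H_i$ it uses the chain
\begin{equation*}
\frac{\V{\mathcal{M}_i}}{R} \;\le\; \frac{\V{\mathcal{M}_i}}{\V{\mathcal{M}_i} + R - \R{\mathcal{M}_i}} \;\le\; \frac{m_i}{m_i + \Rd - 1}\,\ind{H_i \text{ is rejected}},
\end{equation*}
from (\ref{EQN_POS_DEP_2}), which is exactly how the global $R$ is credited for the sub-subtree rejections without the lossy replacement $1/R \le 1/\Rd$; for a false $H_i$ it sums the bound over children; and the final FDR bound sums only over the roots, where $\sum \ell_i = \ell$. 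You would need to adopt this subtree induction (or an equivalent device) for your aggregation to close.
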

\vspace{10pt}

Just like Theorem \ref{THM_HIER_FDR_POS_DEP}, we consider the non-hierarchical configuration of hypotheses. In this special case, all of the critical functions are $r\alpha/(m c)$ where $c = \sum_{j=1}^{m}1/j$ so that this procedure reduces to the stepup procedure with critical function $r\alpha/(mc)$, which is the BY procedure. Thus, this result extends the BY procedure to the testing of hierarchically ordered hypotheses. On the other hand, we also consider the fixed sequence configuration. Here, the rejection threshold for $H_i$ is $m\alpha/(m-i+1)$, which is the same as the procedure from Theorem \ref{THM_HIER_FDR_POS_DEP} under this configuration.

It is easy to see that the critical functions of this procedure are scaled down compared with the procedure from Theorem \ref{THM_HIER_FDR_POS_DEP} in order to ensure the FDR control under arbitrary dependence, similar to the way the BY procedure is scaled down compared with the BH procedure. Consider the example in Figure \ref{IMG_EXAMPLE} (b) which consists of 7 hypotheses. Here, $c_1 = 1, c_2 = c_3 = 1.2,$ and $c_4 = c_5 = c_6 = c_7 = 1.76$ which means the critical functions of Theorem \ref{THM_HIER_FDR_POS_DEP} are as large, 1.2 times larger, and 1.76 times larger than the critical functions of Theorem \ref{THM_HIER_FDR_ARB_DEP} for testing $\mathcal{F}_1$, $\mathcal{F}_2$, and $\mathcal{F}_3$, respectively. The critical function of the BH procedure, on the other hand, is $\sum_{i=1}^{7}1/i = 2.59$ times larger than the critical function of the BY procedure for testing 7 hypotheses in the non-hierarchical setting. This holds in general, that the constants in the critical functions of Theorem \ref{THM_HIER_FDR_ARB_DEP} are much smaller in the hierarchical setting than in the non-hierarchical setting (i.e. the BY procedure). It shows that the FDR controlling procedure under arbitrary dependence tends to be less affected by not having the assumption of positive dependence in the hierarchical setting than in the non-hierarchical setting.

\subsection{Procedures under Block Dependence}
In this subsection, we consider block dependence and develop more powerful versions of the procedures in Theorems \ref{THM_HIER_FDR_POS_DEP} and \ref{THM_HIER_FDR_ARB_DEP} by taking this dependence into account. Since block dependence only describes the dependence of the $p$-values across families, we consider both positive dependence and arbitrary dependence to describe the dependence of the $p$-values within the families which we will refer to as block positive dependence and block arbitrary dependence, respectively.

In the fixed sequence configuration, block dependence reduces to independence. Under this configuration of hypotheses, both of our procedures presented in this subsection reduce to the more powerful FDR controlling fixed sequence procedure under independence, whereas the procedures in the last two subsections reduce to the less powerful FDR controlling fixed sequence procedure under arbitrary dependence \citep{Lynch_2014}.

First, we consider block positive dependence.

\begin{theorem} \label{THM_HIER_FDR_BLOCK_POS} FDR Control under Block Positive Dependence\\
Under Assumption \ref{ASM_POS_DEPENDENCE} and \ref{ASM_BLOCK_DEPENDENCE}, the hierarchical testing procedure with critical functions
\begin{align*}
\alpha_i(r) =
\begin{cases}
\dfrac{\ell_i r\alpha}{\ell + \ell_i(r-1)\alpha} & \text{ if $H_i$ is not a leaf hypothesis} \\
\dfrac{r\alpha}{\ell} & \text{ if $H_i$ is a leaf hypothesis}
\end{cases}
\end{align*}
for $i=1, \ldots, m$, strongly controls the FDR at level $\alpha$.
\end{theorem}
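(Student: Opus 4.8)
The plan is to bound $\text{FDR}=E(V/R)$ by decomposing it over the true null hypotheses and exploiting the family-wise self-consistency relation together with the two dependence assumptions. First I would write $\text{FDR}=\sum_{i:\,H_i\text{ true}}E\big(\ind{H_i\text{ rejected}}/R\big)$, and for each true $H_i\in\mathcal{F}_d$ use the self-consistency property (\ref{EQN_EXACT_SELF_CONSISTENT}) to replace the rejection event by $\{H_{T(i)}\text{ rejected},\,P_i\le\alpha_i(S_{d})\}$, where $S_d=\sum_{j=1}^{d}R(\mathcal{F}_j)$ is the number of rejections in the first $d$ families.

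The second step is to remove the dependence on the other families using Assumption \ref{ASM_BLOCK_DEPENDENCE}. Conditioning on all $p$-values outside $\mathcal{F}_d$ fixes the offset $S_{d-1}$ and the set of testable hypotheses in $\mathcal{F}_d$ (whose parents lie in $\mathcal{F}_{d-1}$), and, by block independence, leaves $(P_i)_{H_i\in\mathcal{F}_d}$ as an independent block inheriting the positive dependence of Assumption \ref{ASM_POS_DEPENDENCE}. Within this block the procedure is exactly a generalized stepup procedure with critical functions $\alpha_i(\,\cdot+S_{d-1})$, so $R(\mathcal{F}_d)$ and the true-rejection count $V(\mathcal{F}_d)$ are functions of the block alone. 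On this conditional block I would invoke the positive-dependence theory for stepup procedures; the key structural fact is that both pieces of the critical function have non-increasing shape $\alpha_i(r)/r$ (for a non-leaf, $\alpha_i(r)/r=\ell_i\alpha/(\ell+\ell_i(r-1)\alpha)$; for a leaf it is the constant $\alpha/\ell$), which is precisely the condition under which a stepup procedure controls an FDR-type functional under positive dependence.

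The hard part is the combination across depths. The denominator is the global count $R$, and a naive decoupling---bounding $1/R$ by $1/S_d$ and summing the resulting per-null bounds---overcounts: a falsely rejected internal node and the falsely rejected leaves beneath it all contribute to $V$ while competing for essentially the same denominator, and one already sees in the fixed-sequence configuration ($\ell=1$) that such per-null bounds sum to strictly more than $\alpha$. The role of the non-leaf critical function $\alpha_i(r)=\ell_i r\alpha/(\ell+\ell_i(r-1)\alpha)$ is exactly to absorb the contribution of an internal node into the contributions of the $\ell_i$ leaves in its subtree. I therefore expect the proof to proceed by induction over the hierarchy (equivalently, over depth from the leaves upward), maintaining a conditional bound asserting that the FDR charge accumulated within the subtree rooted at $H_i$ is at most $\ell_i\alpha/\ell$, up to a factor involving the rejections above it, and closing the induction using the algebraic identity that makes $\ell_i r\alpha/(\ell+\ell_i(r-1)\alpha)$ combine cleanly with the offset $S_{d-1}$. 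Summing the leaf-level budgets $\alpha/\ell$ over the at most $\ell$ leaves then yields the overall bound $\alpha$.

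A final check I would carry out is that the two special configurations behave as claimed: in the non-hierarchical case the argument must collapse to the ordinary BH analysis, and in the fixed-sequence case it must reproduce the sharper independence fixed-sequence bound rather than the arbitrary-dependence one obtained from Theorem \ref{THM_HIER_FDR_POS_DEP}. These serve as sanity tests that the inductive charging is tight and that the gain claimed from Assumption \ref{ASM_BLOCK_DEPENDENCE} is genuinely being used.
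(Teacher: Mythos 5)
Your plan is essentially the paper's own proof: the paper inducts from the leaves upward on the weighted subtree bound $E\bigl(A_i V(\mathcal{M}_i)/R\bigr) \le (\ell_i\alpha/\ell)\,E(A_i)$, where $A_i$ is precisely your ``factor involving the rejections above'' (a product of terms $1 - \ind{H_j \text{ rejected}}/R(\mathcal{G}_{d_j})$ over the true proper ancestors $H_j$ of $H_i$), uses Assumption \ref{ASM_BLOCK_DEPENDENCE} to make $P_i$ independent of $A_i$ and of $R(\mathcal{G}_{d_i-1})$, applies Lemma \ref{LEMMA_POSITIVE_DEPENDENCE} within family $\mathcal{F}_{d_i}$ to obtain $E\bigl(A_i\,\ind{H_i \text{ rejected}}/\alpha_i(R(\mathcal{G}_{d_i}))\bigr)\le E(A_i)$, and closes the induction for a true non-leaf node via the identity $\frac{1}{r}+\bigl(1-\frac{1}{r}\bigr)\frac{\ell_i\alpha}{\ell}=\frac{\ell_i\alpha}{\ell\,\alpha_i(r)}$. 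The only details to adjust in your sketch are that this identity couples a true node's own term $1/R(\mathcal{G}_{d_i})$ with its children's budgets discounted by $(1-1/R(\mathcal{G}_{d_i}))$ rather than with the offset $S_{d-1}$, and that the structural condition actually invoked under positive dependence is monotonicity of $\alpha_i$ itself, the shape of $\alpha_i(r)/r$ mattering only when converting $1/\alpha_i(R(\mathcal{G}_{d_i}))$ back into the FDR denominator.
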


In the non-hierarchical configuration, this procedure reduces to the BH procedure since all the critical functions are $r\alpha/m$. It should be noted that under this configuration there is only one family so that block dependence is irrelevant and we are left with just the positive dependence assumption. Thus, both this procedure and the procedure from Theorem \ref{THM_HIER_FDR_POS_DEP}, which both assume positive dependence, reduce to the BH procedure in the non-hierarchical configuration.

In the hierarchical setting, this procedure offers a large improvement over the critical functions of Theorem \ref{THM_HIER_FDR_POS_DEP}. To see this, consider the binary tree configuration. In this case, the critical functions are
\begin{equation}
\alpha_i(r) = \dfrac{r\alpha}{2^{d-1}+(r-1)\alpha}, ~H_i \in \mathcal{F}_d, d = 1, \dots, D-1 \text {   and   } \alpha_i(r) = \dfrac{r\alpha}{2^{D-1}}, ~H_i \in \mathcal{F}_D. \label{EQN_BINARY_TREE_BLOCK_POS}
\end{equation}
Comparing (\ref{EQN_BINARY_TREE_POS}) to (\ref{EQN_BINARY_TREE_BLOCK_POS}), one can see that (\ref{EQN_BINARY_TREE_BLOCK_POS}) is, in general, much larger than (\ref{EQN_BINARY_TREE_POS}), approximately $r / (1 + (r-1)/(2^{D-d+1}-1))$ times larger when $\alpha$ is small. For example, when $D=5, d=3$, and $r = 4$ the increase is by a factor of almost $3$. Also, compared to Meinshausen's FWER controlling hierarchical testing procedure, which uses critical function $\ell_i\alpha/\ell, i = 1, \dots, m,$ the critical functions of Theorem \ref{THM_HIER_FDR_BLOCK_POS} are approximately $r$ times larger under every configuration for small $\alpha$. Hence, the procedure from Theorem \ref{THM_HIER_FDR_BLOCK_POS}, which requires the strongest dependence assumption to control the FDR, is our most powerful hierarchical testing procedure.

Finally, we consider block arbitrary dependence.
\begin{theorem} \label{THM_HIER_FDR_BLOCK_ARB} FDR Control under Block Arbitrary Dependence\\
Under Assumption \ref{ASM_BLOCK_DEPENDENCE}, the hierarchical testing procedure with critical functions
\begin{align*}
\alpha_i(r) =
\begin{cases}
\dfrac{\ell_i r\alpha}{\ell + \ell_i(r-1)\alpha} \dfrac{1}{c_i} & \text{ if $H_i$ is not a leaf hypothesis} \vspace{3pt}\\
\dfrac{r\alpha}{\ell} \dfrac{1}{c_i} & \text{ if $H_i$ is a leaf hypothesis}
\end{cases}
\end{align*}
where
\begin{align*}
c_i =
\begin{cases}
1 + \sum_{j = 1}^{|\mathcal{F}_{d_i}|-1}\dfrac{\ell - \ell_i\alpha}{(j + d_i)(\ell+\ell_i (j + d_i - 2) \alpha)} & \text{ if $H_i$ is not a leaf hypothesis} \vspace{3pt}\\
1 + \sum_{j = 1}^{|\mathcal{F}_{d_i}|-1}\dfrac{1}{j+d_i}& \text{ if $H_i$ is a leaf hypothesis}
\end{cases}
\end{align*}
and $|\mathcal{F}_{d_i}|$ is the cardinality of $\mathcal{F}_{d_i}$, strongly controls the FDR at level $\alpha$.
\end{theorem}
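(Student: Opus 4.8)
My plan is to run the proof of Theorem~\ref{THM_HIER_FDR_BLOCK_POS} essentially unchanged, modifying only the single step in which the \emph{within}-family dependence is used. Both theorems impose block dependence (Assumption~\ref{ASM_BLOCK_DEPENDENCE}), so the across-family structure is identical; the only difference is that Theorem~\ref{THM_HIER_FDR_BLOCK_POS} assumes positive dependence inside each $\mathcal{F}_d$ whereas here the within-family dependence is arbitrary. I would therefore begin from
\[
\text{FDR} = \csum{i\,:\,H_i\text{ true}}{}\expv{\frac{\ind{H_i\text{ rejected}}}{R}},
\]
and, using the self-consistency relation \eqref{EQN_EXACT_SELF_CONSISTENT} and the characterization $\{H_i\text{ rejected}\}=\{H_{T(i)}\text{ rejected},\ P_i\le\alpha_i(\sum_{j=1}^{d_i}\R{\mathcal{F}_j})\}$, reduce the FDR to a sum of single-family contributions in exactly the manner of the block positive case.

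For each true $H_i\in\mathcal{F}_{d_i}$ I would then condition on all $p$-values outside $\mathcal{F}_{d_i}$. By Assumption~\ref{ASM_BLOCK_DEPENDENCE} the rejection counts of the other families and the event $\{H_{T(i)}\text{ rejected}\}$ are fixed by this conditioning, while $P_i$ and $\R{\mathcal{F}_{d_i}}$ remain functions of the arbitrarily dependent $p$-values in $\mathcal{F}_{d_i}$. This is precisely the reduction made in Theorem~\ref{THM_HIER_FDR_BLOCK_POS}, and it leaves a single within-family expectation of a stepup statistic to be bounded. In that proof this one expectation is controlled using the positive dependence assumption; every other part of the argument---the reduction above, the combination across families, and the telescoping over the tree (via $\ell_i=\sum_{c:T(c)=i}\ell_c$ and $\sum_{\text{roots}}\ell_{\text{root}}=\ell$) that collapses the family bounds to $\alpha$---is indifferent to the within-family dependence and can be reused verbatim.

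The heart of the new argument is to replace that single positive-dependence bound by an arbitrary-dependence reshaping. Writing the relevant within-family statistic with cumulative count $n$ and denominator $g(n)$, I would partition the rejection event according to the bin $\{\alpha_i(n-1)<P_i\le\alpha_i(n)\}$, note that membership in the $k$-th bin forces the count to be at least $k$ and hence $1/g(\cdot)\le 1/g(k)$, and then apply summation by parts to $\sum_k g(k)^{-1}[\,\pr{P_i\le\alpha_i(k)}-\pr{P_i\le\alpha_i(k-1)}\,]$ using the marginal bound \eqref{EQN_UNIFORM}. This is the Benjamini--Yekutieli reshaping localized to a single family, whose size is $\cardfdi$ and whose admissible counts begin at $d_i$ (since the $d_i-1$ ancestors of $H_i$ together with $H_i$ itself must be rejected). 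The multiplicative penalty produced by the summation by parts is exactly the constant $c_i$, the leaf and non-leaf formulas for $c_i$ corresponding to the two cases $\alpha_i(r)=r\alpha/\ell$ and $\alpha_i(r)=\ell_i r\alpha/(\ell+\ell_i(r-1)\alpha)$. Because the critical functions of Theorem~\ref{THM_HIER_FDR_BLOCK_ARB} are exactly those of Theorem~\ref{THM_HIER_FDR_BLOCK_POS} divided by $c_i$, this penalty cancels, each within-family contribution matches the corresponding block positive bound, and the reused telescoping then yields $\text{FDR}\le\alpha$.

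I expect the main obstacle to be the reshaping computation for the non-leaf critical functions, which are rational rather than linear in $r$. One must carry out the summation by parts for $\alpha_i(r)=\ell_i r\alpha/(\ell+\ell_i(r-1)\alpha)$, show that the boundary term and the telescoped interior sum combine to give precisely the factor $c_i$ recorded in the statement, and verify that $c_i$ is the worst case as the frozen count ranges over its admissible values (so that a single division by $c_i$ suffices uniformly). The leaf case reduces cleanly because there $\alpha_i(r)/r$ is constant, but for the non-leaf case confirming this algebraic identity, rather than the probabilistic reduction, is the delicate part.
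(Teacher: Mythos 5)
Your plan coincides with the paper's proof: the paper reuses the Theorem \ref{THM_HIER_FDR_BLOCK_POS} machinery (the auxiliary weights $A_i$, the subtree induction, and the telescoping of $\ell_i\alpha/\ell$ to $\alpha$) essentially verbatim, and replaces the single positive-dependence bound $E[A_i\ind{H_i\text{ is rejected}}/\alpha_i(R(\mathcal{G}_{d_i}))]\le E[A_i]$ by the summation-by-parts bound $\le c_i\,E[A_i]$ obtained from Lemma \ref{LEMMA_ARBITRARY_DEPENDENCE} after conditioning on the $p$-values of the earlier families, so that the $1/c_i$ in the critical functions cancels the penalty exactly as you describe. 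The delicate point you flag is resolved there just as you anticipate: $c_i$ is rewritten as $1+\sum_{r=2}^{|\mathcal{F}_{d_i}|}[\alpha_i(r+d_i-1)-\alpha_i(r+d_i-2)]/\alpha_i(r+d_i-1)$, and it is the worst case over the frozen count because $R(\mathcal{G}_{d_i-1})\ge d_i-1$ on the event that $H_{T(i)}$ is rejected and the increment ratio is decreasing in that count.
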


This procedure reduces to the BY procedure in the non-hierarchical configuration so that both procedures under arbitrary dependence reduce to the BY procedure. Similar to the procedures from Theorems \ref{THM_HIER_FDR_POS_DEP} and \ref{THM_HIER_FDR_ARB_DEP}, the critical functions of this procedure are a factor smaller than the critical functions of Theorem \ref{THM_HIER_FDR_BLOCK_POS}. Again, we consider the hypotheses in Figure \ref{IMG_EXAMPLE} (b). Here, $c_1 = 1, c_2 = c_3 = 1.317,$ and $c_4 = c_5 = c_6 = c_7 = 1.760$ at $\alpha = 0.05$. In this example, the $c_i$'s are significantly smaller than the constant for the BY procedure for testing 7 hypotheses, which is 2.59. However, $c_2$ and $c_3$ for Theorem \ref{THM_HIER_FDR_BLOCK_ARB} are larger than $c_2$ and $c_3$ for Theorem \ref{THM_HIER_FDR_ARB_DEP}, which are both $1.2$, but this is not true in general for the $c_i$'s. The portion of the critical function without $c_i$, is generally much larger for Theorem \ref{THM_HIER_FDR_BLOCK_ARB} than for Theorem \ref{THM_HIER_FDR_ARB_DEP} so that the procedure from Theorem \ref{THM_HIER_FDR_BLOCK_ARB} is typically more powerful than the procedure from Theorem \ref{THM_HIER_FDR_ARB_DEP}.

\begin{remark}
Our proofs of the theorems in this section heavily rely on mathematical induction. The hierarchical structure of the hypotheses implies a recursive property where the hypotheses in the subtree under any hypothesis also form a hierarchical structure. Hence, mathematical induction is a natural choice for proving results for hierarchical structures.
\end{remark}

Below, we demonstrate how the hierarchical testing procedure in Theorem \ref{THM_HIER_FDR_POS_DEP} works through an example as well as Yekutieli's and Meinshausen's hierarchical testing procedures.

\begin{table}
    \begin{center}
    \caption{The procedure from Theorem \ref{THM_HIER_FDR_POS_DEP} at level $\alpha = 0.05$ to hierarchically test the hypotheses presented in Figure \ref{IMG_EXAMPLE}(b) with $p$-values $p_1 = 0.01, p_2 = 0.75, p_3 = 0.008, p_4 = 0.6, p_5 = 0.85, p_6 = 0.03,$ and $p_7 = 0.05$.}
    \begin{tabular}{|c|c||c|c|c|c|c|c|c||l|}
        \hline
        \multicolumn{2}{|l|}{Procedure \ref{THM_HIER_FDR_BLOCK_POS}}  & $i = 1$ & $i = 2$  & $i = 3$ & $i = 4$   & $i = 5$     & $i = 6$ & $i = 7$ & Outcome\\ \noalign{\hrule height 2pt}
        \multicolumn{10}{|l|}{\textbf{Family 1}}  \\ \hline
        generalized stepup  & $ \alpha_i^*(R)$ & 0.05 & -     & -     & -     & -     & -     & - & $R = 1$    \\ \hline
        \multicolumn{10}{|l|}{\hspace{10pt} Reject $H_1$ and set $R(\mathcal{F}_{1}) = 1$}  \\ \noalign{\hrule height 3pt}
        \multicolumn{10}{|l|}{\textbf{Family 2}}  \\ \hline
        generalized stepup  & $ \alpha_i^*(R)$ & -  & 0.033 & 0.033 & -     & -     & -     & - & $R = 1$     \\ \hline
        \multicolumn{10}{|l|}{\hspace{10pt} Accept $H_2$, reject $H_3$ and set $R(\mathcal{F}_{2}) = 1$}  \\ \noalign{\hrule height 3pt}
        \multicolumn{10}{|l|}{\textbf{Family 3}}  \\ \hline
        generalized stepup  & $ \alpha_i^*(R)$ & -  & -    & -     & 0     & 0     & 0.05 & 0.05 & $R = 2$ \\ \hline
        \multicolumn{10}{|l|}{\hspace{10pt} Accept $H_4$ and $H_5$ and reject $H_6$ and $H_7$. Set $R(\mathcal{F}_{3}) = 2$}  \\ \noalign{\hrule height 3pt}
    \end{tabular}
    \label{TABLE_HIERARCH_EXMP}
    \end{center}
\end{table}
\begin{exmp}
Consider the example presented in Figure 1(b). The maximum depth of the tree is 3 and the seven hypotheses in the tree are grouped as 3 families, which are $\{H_1\}$, $\{H_2, H_3\}$, and $\{H_4, H_5, H_6, H_7\}$. Suppose the $p$-values are $p_1 = 0.01, p_2 = 0.75, p_3 = 0.008, p_4 = 0.6, p_5 = 0.85, p_6 = 0.03$, and $p_7 = 0.05$ and the hypotheses are tested using the procedure from Theorem \ref{THM_HIER_FDR_POS_DEP}, Yekutieli's procedure, and Meinshausen's procedure at level $\alpha = 0.05$.

Table \ref{TABLE_HIERARCH_EXMP} shows the value of the variables step-by-step for the procedure from Theorem \ref{THM_HIER_FDR_POS_DEP}. The first family is tested using the generalized stepup procedure and $H_1$, the only hypothesis in this family, is rejected. Now, $R(\mathcal{F}_1) = 1$. The second family is tested using the generalized stepup procedure with critical functions $\alpha_2^*(r) = \alpha_2(r+1)$ and $\alpha_3^*(r) = \alpha_3(r+1)$. $H_3$ can be rejected but $H_2$ cannot. Thus, $R(\mathcal{F}_2) = 1$. Finally, the third family is tested. Since $H_2$ was accepted and $H_3$ was rejected, we have $\alpha^*_4(r) = \alpha^*_5(r) = 0$, $\alpha^*_6(r) = \alpha_6(r+2)$, and $\alpha^*_7(r) = \alpha_7(r+2)$. Hypotheses $H_6$ and $H_7$ are rejected by the generalized stepup procedure.

Yekutieli's hierarchical testing procedure groups the hypotheses into families that share the same parent hypothesis so that the 4 families are $\{H_1\}$, $\{H_2, H_3\}$, $\{H_4, H_5\}$, and $\{H_6, H_7\}$. This procedure rejects hypotheses $H_1$ and $H_3$ (Table \ref{TABLE_HIERARCH_EXMP_YEKUT}). Meinshausen's hierarchical testing procedure uses a fixed rejection threshold $\ell_i\alpha/\ell$ for testing $H_i$. This procedure rejects $H_1$ and $H_3$ (Table \ref{TABLE_HIERARCH_EXMP_MEIN}).
\begin{table}
	\begin{center}
    \caption{Yekutieli's procedure at level $\alpha = 0.05$ to hierarchically test the hypotheses presented in Figure \ref{IMG_EXAMPLE}(b) with $p$-values $p_1 = 0.01, p_2 = 0.75, p_3 = 0.008, p_4 = 0.6, p_5 = 0.85, p_6 = 0.03,$ and $p_7 = 0.05$.} \vskip 10pt
    \begin{tabular}{|c|c||c|c|c|c|c|c|c||l|}
        \hline
        \multicolumn{2}{|l|}{Yekutieli's Procedure}   & $i = 1$ & $i = 2$  & $i = 3$ & $i = 4$   & $i = 5$     & $i = 6$ & $i = 7$ & Outcome\\ \noalign{\hrule height 2pt}
        \multicolumn{10}{|l|}{\textbf{Family 1}}  \\ \hline
        BH procedure  & $ \alpha_i(R)$ & 0.0174 & -     & -     & -     & -     & -     & - & $R = 1$    \\ \hline
        \multicolumn{10}{|l|}{\hspace{10pt} Reject $H_1$}  \\ \noalign{\hrule height 3pt}
        \multicolumn{10}{|l|}{\textbf{Family 2}}  \\ \hline
        BH procedure  & $ \alpha_i(R)$ & -  & 0.009       & 0.009     & -     & -     & -     & - & $R = 1$     \\ \hline
        \multicolumn{10}{|l|}{\hspace{10pt} Accept $H_2$ and reject $H_3$}  \\ \noalign{\hrule height 3pt}
        \multicolumn{10}{|l|}{\textbf{Family 3}}  \\ \hline
        Not Tested    &              & -  & -       & -     & -     & -     & -     & - &      \\ \hline
        \multicolumn{10}{|l|}{\hspace{10pt} Accept $H_4$ and $H_5$}  \\ \noalign{\hrule height 3pt}
        \multicolumn{10}{|l|}{\textbf{Family 4}}  \\ \hline
        BH procedure  & $ \alpha_i(R)$ & -  & -       & -     & -     & -     & 0      & 0 & $R = 0$  \\ \hline
        \multicolumn{10}{|l|}{\hspace{10pt} Accept $H_6$ and $H_7$}  \\ \noalign{\hrule height 3pt}
    \end{tabular}
    \label{TABLE_HIERARCH_EXMP_YEKUT}
    \end{center}
\end{table}
\begin{table}
        \begin{center}
    \caption{Meinshausen's procedure at level $\alpha = 0.05$ to hierarchically test the hypotheses presented in Figure \ref{IMG_EXAMPLE}(b) with $p$-values $p_1 = 0.01, p_2 = 0.75, p_3 = 0.008, p_4 = 0.6, p_5 = 0.85, p_6 = 0.03,$ and $p_7 = 0.05$.}
    \begin{tabular}{|c|c||c|c|c|c|c|c|c||l|}
        \hline
        \multicolumn{2}{|l|}{Meinshausen's Procedure}   & $i = 1$ & $i = 2$  & $i = 3$ & $i = 4$   & $i = 5$     & $i = 6$ & $i = 7$ & Outcome\\ \noalign{\hrule height 2pt}
        \multicolumn{10}{|l|}{\textbf{Family 1}}  \\ \hline
        Single Step  & $ \alpha_i$ & 0.05 & -     & -     & -     & -     & -     & - & $R = 1$    \\ \hline
        \multicolumn{10}{|l|}{\hspace{10pt} Reject $H_1$}  \\ \noalign{\hrule height 3pt}
        \multicolumn{10}{|l|}{\textbf{Family 2}}  \\ \hline
        Single Step  & $ \alpha_i$ & -  & 0.025   & 0.025  & -     & -     & -     & - & $R = 1$     \\ \hline
        \multicolumn{10}{|l|}{\hspace{10pt} Accept $H_2$ and reject $H_3$ }  \\ \noalign{\hrule height 3pt}
        \multicolumn{10}{|l|}{\textbf{Family 3}}  \\ \hline
        Single Step  & $ \alpha_i$ & -  & -       & -    & -     & -     & 0.0125     & 0.0125 & $R = 0$     \\ \hline
        \multicolumn{10}{|l|}{\hspace{10pt} Accept $H_4,H_5,H_6$, and $H_7$}  \\ \noalign{\hrule height 3pt}
    \end{tabular}
    \label{TABLE_HIERARCH_EXMP_MEIN}
    \end{center}
\end{table}
\end{exmp}

\section{Simulation Study}

We conducted a simulation study to evaluate the performance of the proposed procedures. Specifically, the simulation study compared the performance of our proposed procedures, which are labeled Procedures \ref{THM_HIER_FDR_POS_DEP}-\ref{THM_HIER_FDR_BLOCK_ARB} corresponding to the procedures introduced in Theorems \ref{THM_HIER_FDR_POS_DEP}-\ref{THM_HIER_FDR_BLOCK_ARB}, against Yekutieli's FDR controlling procedure in terms of the FDR control and average power. Several dependence configurations were considered as well as different hierarchical structures.

We generated $m$ normal random variables with covariance matrix $\Sigma$ and mean vector $\vec{\mu} = (\mu_1, \dots, \mu_{m})$ to test the $m$ hypotheses $H_i: \mu_i \le 0$ versus $H_i': \mu_i > 0, i = 1, \dots, m$. When $H_i$ was true, we set $\mu_i = 0$. When $H_i$ was false, we set $\mu_i$ to a positive value which was non-increasing in its depth $d_i$. Our intention was to simulate the setting where hypotheses that are near the top of the hierarchy are easier to reject than hypotheses near the bottom. As for the joint dependence, we considered a common correlation structure where $\Sigma$ had off-diagonal components equal to $\rho$ and diagonal components equal to 1. The $p$-value for testing the $i^{th}$ leaf hypothesis was calculated using a one sided, one-sample Z-test.

We constructed two types of hierarchies: a shallow hierarchy and a deep hierarchy. Both hierarchies had 1000 leaf hypotheses.

The leaf hypotheses were randomly chosen with probability $\pi_0$ to be true and $1 - \pi_0$ to be false. Each non-leaf hypothesis was set to true only if all of its child hypotheses were true; otherwise it was set to false. For both hierarchies, the tree was balanced so that each parent hypothesis had the same number of child hypotheses. The two hierarchies are described in detail below.

\vskip 10pt
\noindent
\textbf{Shallow Hierarchy}: The maximum depth of this tree is 2 so that a hypothesis is either a leaf hypothesis or a top-level hypothesis with no parent. There are 10 top-level hypotheses each of which have 100 child hypotheses giving a total of 1010 hypotheses. For each false hypothesis $H_i$, $\mu_i = 3$ if $d_i = 1$ and $\mu_i = 2$ if $d_i = 2$.

\vskip 5pt
\noindent
\textbf{Deep Hierarchy}: The maximum depth of this tree is 4 and there are 8 top-level parents. Each parent hypothesis has 5 child hypotheses giving a total of 1248 hypotheses. For each false hypothesis $H_i$, $\mu_i = 3.5$ if $d_i = 1$, $\mu_i = 3$ if $d_i = 2$ or $3$, and $\mu_i = 2$ if $d_i = 4$.

\vskip 10pt
We set $\alpha = 0.05$ and for each procedure, we noted the false discovery proportion, which is the proportion of falsely rejected hypotheses among all rejected hypotheses, and the the proportion of rejected false null hypotheses among all false null hypotheses. Each tree was generated and tested 5000 times and the simulated values of the FDR and average power were obtained by averaging out the 5000 values of these two proportions, respectively.

\begin{figure}[t]
    \begin{center}
        \includegraphics[scale=.65,angle=270, trim=0 50 0 50]{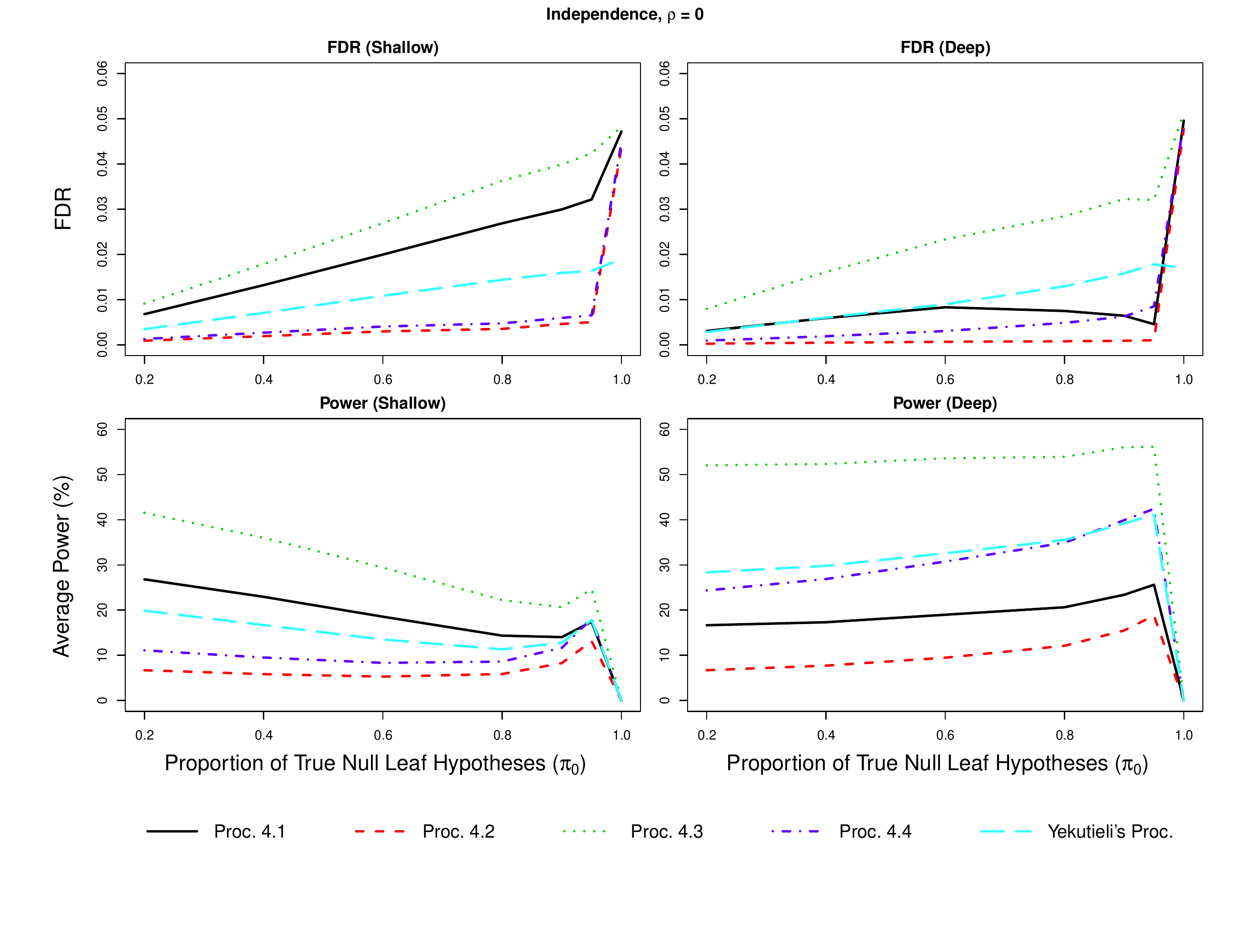}
        \vskip -45pt
        \caption{\textit{FDR (top row) and average power (bottom row) of Procedures \ref{THM_HIER_FDR_POS_DEP} (solid line), \ref{THM_HIER_FDR_ARB_DEP} (dashed), \ref{THM_HIER_FDR_BLOCK_POS} (dotted), \ref{THM_HIER_FDR_BLOCK_ARB} (dot dash), and Yekutieli's procedure (long dash) under independence for the shallow hierarchy (left column) and the deep hierarchy (right column) where the proportion of true null leaf hypotheses varies from 0.2 to 1.}}
        \label{IMG_INDEPENDENCE}
    \end{center}
\end{figure}

\begin{figure}[h!]
    \begin{center}
        \includegraphics[scale=.65,angle=270, trim=0 50 0 50]{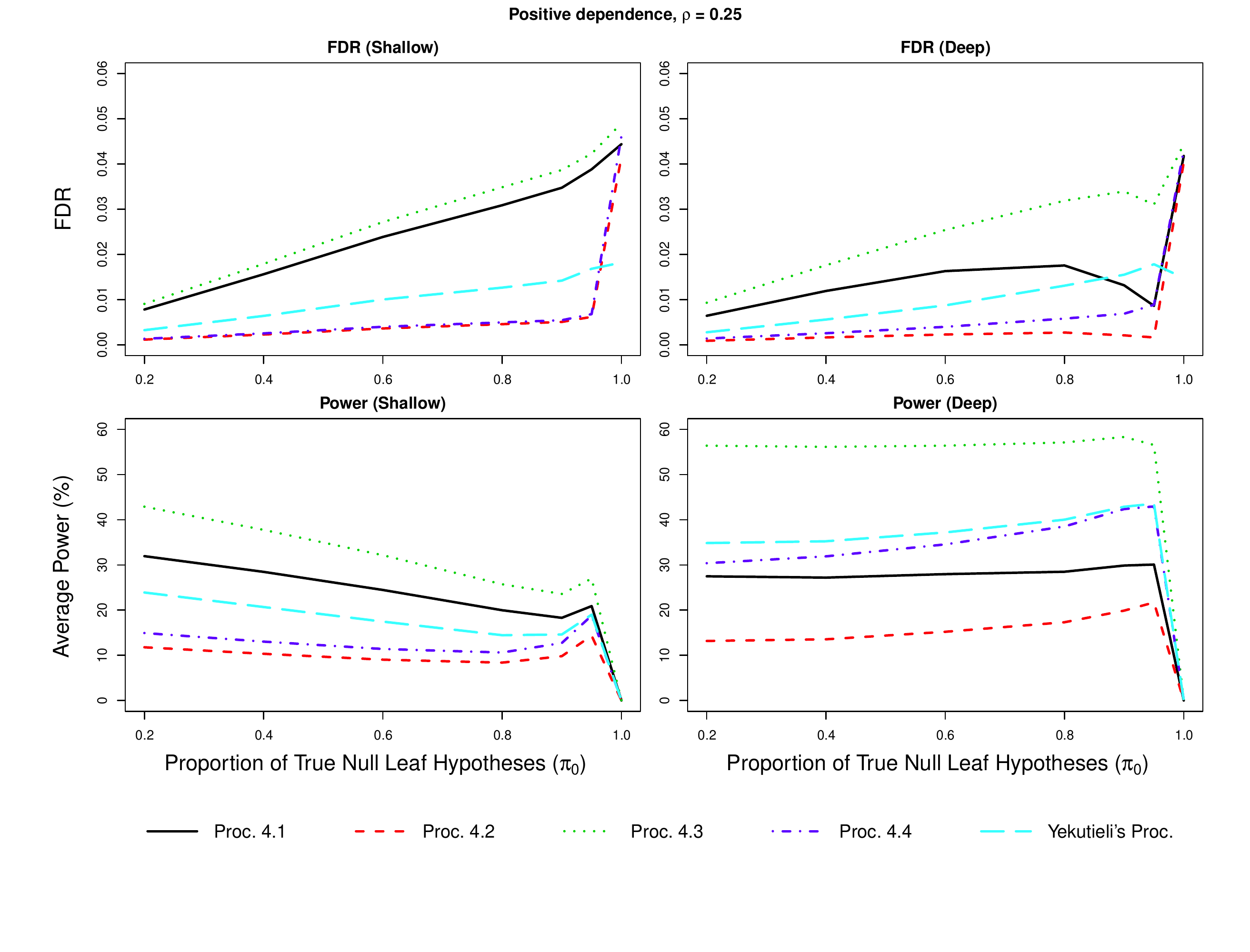}
        \vskip -45pt
        \caption{\textit{FDR (top row) and average power (bottom row) of Procedures \ref{THM_HIER_FDR_POS_DEP} (solid line), \ref{THM_HIER_FDR_ARB_DEP} (dashed), \ref{THM_HIER_FDR_BLOCK_POS} (dotted), \ref{THM_HIER_FDR_BLOCK_ARB} (dot dash), and Yekutieli's procedure (long dash) under common correlation with $\rho = 0.25$ for the shallow hierarchy (left column) and the deep hierarchy (right column) where the proportion of true null leaf hypotheses varies from 0.2 to 1.}}
        \label{IMG_POS_DEPENDENCE_25}
    \end{center}
\end{figure}

\begin{figure}[h!]
    \begin{center}
        \includegraphics[scale=.65,angle=270, trim=0 50 0 50]{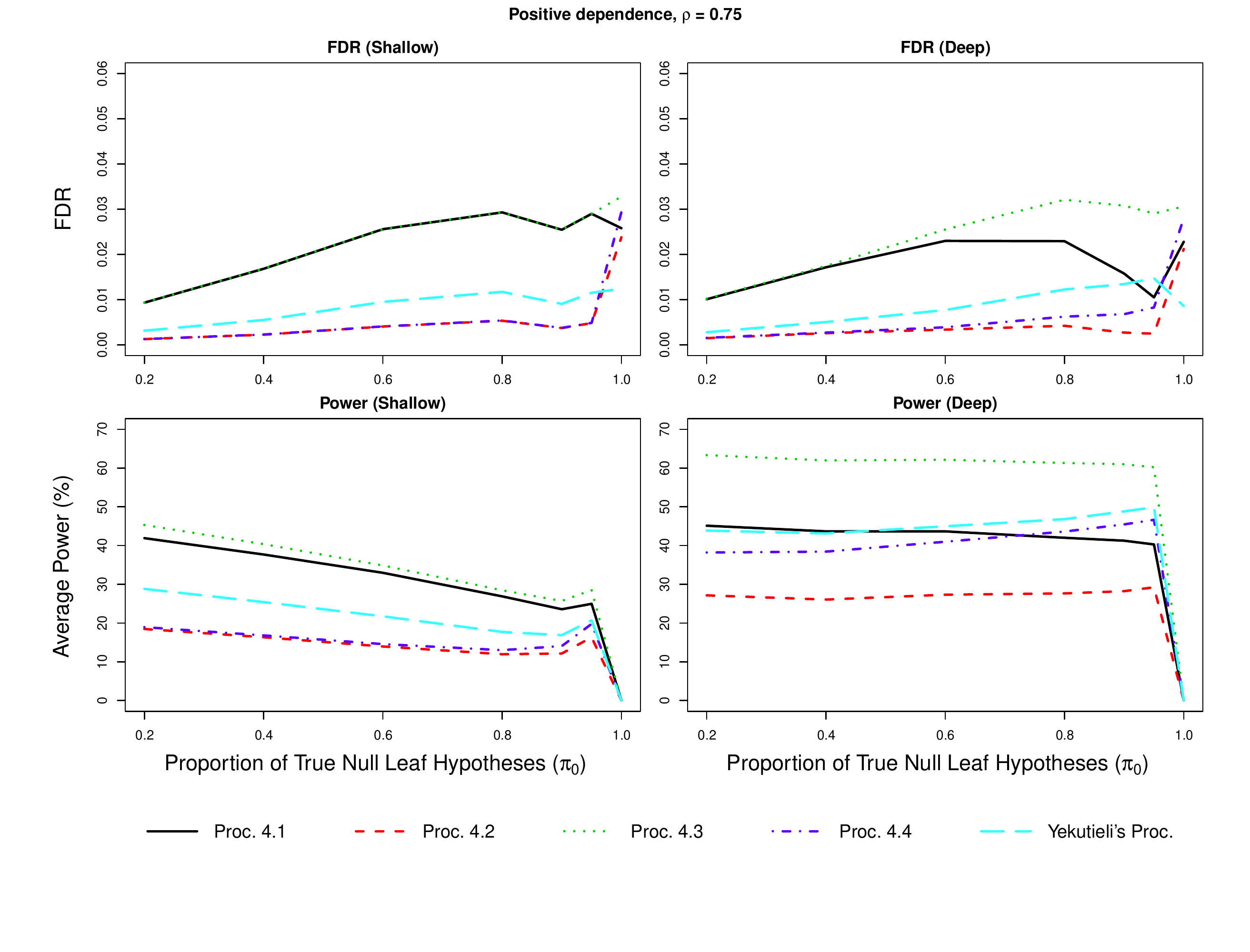}
        \vskip -45pt
        \caption{\textit{FDR (top row) and average power (bottom row) of Procedures \ref{THM_HIER_FDR_POS_DEP} (solid line), \ref{THM_HIER_FDR_ARB_DEP} (dashed), \ref{THM_HIER_FDR_BLOCK_POS} (dotted), \ref{THM_HIER_FDR_BLOCK_ARB} (dot dash), and Yekutieli's procedure (long dash) under common correlation with $\rho = 0.75$ for the shallow hierarchy (left column) and the deep hierarchy (right column) where the proportion of true null leaf hypotheses varies from 0.2 to 1.}}
        \label{IMG_POS_DEPENDENCE_75}
    \end{center}
\end{figure}

Figure \ref{IMG_INDEPENDENCE} displays the FDR and average power under independence as $\pi_0$ varies from 0.2 to 1. As seen from Figure \ref{IMG_INDEPENDENCE}, all the procedures control the FDR at level 0.05. In terms of power, Procedure \ref{THM_HIER_FDR_BLOCK_POS} outperforms Yekutieli's procedure quite substantially and in some cases even doubles the power of Yekutieli's procedure. Procedure \ref{THM_HIER_FDR_POS_DEP}, which controls the FDR under positive dependence, outperforms Yekutieli's procedure under the shallow hierarchy but is outperformed by Yekutieli's procedure in the deep hierarchy. In the deep hierarchy, Procedure \ref{THM_HIER_FDR_BLOCK_ARB} and Yekutieli's procedure are comparable in terms of power. Not surprisingly, Procedure \ref{THM_HIER_FDR_ARB_DEP}, which controls the FDR under arbitrary dependence, performs the worst.

Figures \ref{IMG_POS_DEPENDENCE_25} and \ref{IMG_POS_DEPENDENCE_75} display the FDR and average power under common correlation with $\rho = 0.25$ and $\rho = 0.75$, respectively, as $\pi_0$ varies from 0.2 to 1. The FDRs of all the procedures are controlled at level $0.05$ under both weak and strong correlation. It should be noted that assumption \ref{ASM_BLOCK_DEPENDENCE} (block dependence) does not hold under this dependence configuration but Procedures \ref{THM_HIER_FDR_BLOCK_POS} and \ref{THM_HIER_FDR_BLOCK_ARB} still control the FDR suggesting that both procedures are fairly robust to departures from this assumption. In terms of power, these figures show a similar pattern to Figure \ref{IMG_INDEPENDENCE} where Procedure \ref{THM_HIER_FDR_BLOCK_POS} is the most powerful and Procedure \ref{THM_HIER_FDR_ARB_DEP} is the least powerful. The remaining three procedures fall somewhere in the middle depending on the setting.

\section{Real Data Analysis}

We applied our proposed procedures as well as Yekutieli's procedure to a real data set. We used the data set of \cite{Caporaso_2011}, available in the phyoseq Bioconductor package at www.bioconductor.org, which provides the abundances of individual microbes in different ecological environments as well as their phylogenetic relationships. The data can be naturally organized into a hierarchy consisting of taxonomic units according to their phylogenetic relationships. The question of interest is whether there is an association between a taxonomic unit and ecological environment. Specifically, we tested the null hypothesis that the mean abundance for the taxonomic unit is the same across environments versus the alternative hypotheses that the mean abundance for the taxonomic unit is different across environments. The $p$-value for each hypothesis was determined by using an F-test where the abundance for a taxonomic unit in a given environment was determined based on the total abundance of each microbe within the taxonomic unit for the given environment (for more information see \cite{Sankaran_2013}).

We restricted our analysis to the microbes in the Actinobacteria phylum which had 1631 individual microbes. The taxonomic hierarchy in the Actinobacteria phylum consisted of 3261 taxonomic units so that the total number of hypotheses is 3261 across 39 families. We tested the hypotheses at various significance levels and the number of rejections for each procedure are displayed in Table \ref{TABLE_REJECTIONS}. All of the procedures are seen to make a substantial number of discoveries, even when $\alpha = 0.01$. In terms of the number of rejections, one can easily see that Procedure \ref{THM_HIER_FDR_BLOCK_POS} is by far the best, significantly outperforming the other procedures. Procedure \ref{THM_HIER_FDR_BLOCK_ARB} outperforms Yekutieli's procedure when $\alpha$ is moderate to large but Yekutieli's procedure outperforms Procedure \ref{THM_HIER_FDR_BLOCK_ARB} when $\alpha$ is small. Procedure \ref{THM_HIER_FDR_ARB_DEP} is, not surprisingly, the worst since it is the only procedure that controls the FDR under arbitrary dependence.
\begin{table}[h!]
    \caption{The number of rejections out of 3261 hypotheses by Procedures \ref{THM_HIER_FDR_POS_DEP}, \ref{THM_HIER_FDR_ARB_DEP}, \ref{THM_HIER_FDR_BLOCK_POS}, \ref{THM_HIER_FDR_BLOCK_ARB}, and Yekutieli's procedure at various significance levels for the microbe abundance data set of \cite{Caporaso_2011} restricted to the Actinobacteria phylum.} \vskip -15pt
    \begin{center}
    \begin{tabular}{|c||c|c|c|c|c|}
        \hline
        \textbf{$\alpha$} & \textbf{Procedure \ref{THM_HIER_FDR_POS_DEP}} & \textbf{Procedure \ref{THM_HIER_FDR_ARB_DEP}} & \textbf{Procedure \ref{THM_HIER_FDR_BLOCK_POS}} & \textbf{Procedure \ref{THM_HIER_FDR_BLOCK_ARB}} & \textbf{Yekutieli's Procedure}\\ \noalign{\hrule height 2pt}
          0.01  & 75  & 68  & 144  & 107 & 123 \\ \hline
          0.025 & 88  & 75  & 574  & 148 & 165 \\ \hline
          0.05  & 118 & 92  & 1156 & 353 & 230 \\ \hline
          0.1   & 138 & 108 & 1497 & 813 & 253 \\ \hline
    \end{tabular}
    \label{TABLE_REJECTIONS}
    \end{center}
\end{table}

\section{Conclusion}

In this paper, we have developed several FDR controlling procedures for testing hierarchically ordered hypotheses. To our knowledge, we have, for the first time, presented hierarchical testing methods with proven FDR control under dependence. Furthermore, we have developed a method which controls the FDR under block positive dependence and in our simulation study, it was shown to be more powerful than Yekutieli's hierarchical testing  procedure and other proposed procedures. A particularly interesting aspect of this work is that we have connected two contrasting testing methods in the proposed hierarchical testing methods: fixed sequence procedures, which assume the hypotheses have a fixed pre-defined testing order, and stepwise procedures, which do not assume the hypotheses having any pre-defined testing order.

We believe in this paper we have made a significant step in terms of multiple testing with structured hypotheses. The techniques developed in this paper can be used to develop procedures to test hypotheses with more complex hierarchical structures where hypotheses are not restricted to only one parent. Such procedures would have applications towards testing interaction hypotheses, for example in gene expression data, where main effects are tested first and pairwise interactions are tested only if the two main effects making up the interaction are significant.

\section{Appendix}
Let us first state and prove the following lemmas which are used in the proofs of Theorems \ref{THM_HIER_FDR_POS_DEP}, \ref{THM_HIER_FDR_ARB_DEP}, \ref{THM_HIER_FDR_BLOCK_POS}, and \ref{THM_HIER_FDR_BLOCK_ARB}.

\begin{lemma} \label{LEMMA_POSITIVE_DEPENDENCE}
Under Assumption \ref{ASM_POS_DEPENDENCE}, if $\Gamma(P_1, \dots, P_m)$ is a discrete coordinatewise non-increasing function of the $p$-values taking on values $\gamma_1 < \dots < \gamma_n$ and $t(\cdot)$ is a non-decreasing function on $\{\gamma_1, \ldots, \gamma_n\}$, then for each true null $H_j$,
\[
\sum_{i=1}^{n}\pr{\Gamma = \gamma_i \cond P_j \le t(\gamma_i)} \le \pr{\Gamma \ge \gamma_1 \cond P_j \le t(\gamma_1)}.
\]
\end{lemma}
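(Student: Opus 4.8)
The plan is to recognize first that the right-hand side is simply $1$. Since $\gamma_1$ is the smallest value taken by $\Gamma$, the event $\{\Gamma \ge \gamma_1\}$ is certain, so $\pr{\Gamma \ge \gamma_1 \cond P_j \le t(\gamma_1)} = 1$. The task therefore reduces to showing that the sum of the conditional point masses on the left is at most $1$, the genuine subtlety being that the $i$-th summand conditions on a \emph{different} event $\{P_j \le t(\gamma_i)\}$, so the terms cannot simply be added as masses of one distribution.

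To exploit the positive dependence I would pass to tail probabilities. Define $g_i(p) = \pr{\Gamma \ge \gamma_i \cond P_j \le p}$, with the convention $g_{n+1} \equiv 0$ (taking $\gamma_{n+1} = \infty$). Because $\Gamma$ is coordinatewise non-increasing, the indicator $\ind{\Gamma \ge \gamma_i}$ is itself a coordinatewise non-increasing function of the $p$-values, so its complement $1 - \ind{\Gamma \ge \gamma_i}$ is coordinatewise non-decreasing. Applying Assumption \ref{ASM_POS_DEPENDENCE} to this non-decreasing function at the true null $H_j$ shows that $g_i(p)$ is non-increasing in $p$ for each fixed $i$. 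Using that $\Gamma$ takes only the discrete values $\gamma_1 < \dots < \gamma_n$, each summand decomposes as a difference of tails,
\[
\pr{\Gamma = \gamma_i \cond P_j \le t(\gamma_i)} = g_i(t(\gamma_i)) - g_{i+1}(t(\gamma_i)).
\]

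The key step is then to replace the argument $t(\gamma_i)$ in the subtracted term by $t(\gamma_{i+1})$. Since $t$ is non-decreasing we have $t(\gamma_i) \le t(\gamma_{i+1})$, and since $g_{i+1}$ is non-increasing this yields $g_{i+1}(t(\gamma_i)) \ge g_{i+1}(t(\gamma_{i+1}))$, so each summand is bounded above by $g_i(t(\gamma_i)) - g_{i+1}(t(\gamma_{i+1}))$. Writing $a_i = g_i(t(\gamma_i))$, the right-hand sides form the telescoping sum $\sum_{i=1}^{n}(a_i - a_{i+1}) = a_1 - a_{n+1} = g_1(t(\gamma_1)) - g_{n+1}(t(\gamma_{n+1})) = 1 - 0 = 1$, which is exactly the claimed bound.

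I expect the only real obstacle to be the bookkeeping rather than any deep difficulty: one must check carefully that $\ind{\Gamma \ge \gamma_i}$ is non-increasing (so that Assumption \ref{ASM_POS_DEPENDENCE} is applied to its non-decreasing complement, giving $g_i$ non-increasing in the correct direction) and then align the indices so that after the monotonicity substitution the sum collapses cleanly, with the boundary terms $g_1(t(\gamma_1)) = 1$ and $g_{n+1} \equiv 0$ accounting for the final value. Everything else is the routine rewriting of each conditional point mass as a difference of conditional tail probabilities.
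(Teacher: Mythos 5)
Your proof is correct and follows essentially the same route as the paper: decompose each conditional point mass $\pr{\Gamma=\gamma_i \cond P_j \le t(\gamma_i)}$ into a difference of conditional tail probabilities, use Assumption \ref{ASM_POS_DEPENDENCE} (applied to the non-decreasing complement of $\ind{\Gamma\ge\gamma_i}$) to show each tail $\pr{\Gamma\ge\gamma_i \cond P_j\le p}$ is non-increasing in $p$ so the arguments can be aligned, and telescope. Your preliminary observation that the right-hand side equals $1$ is a harmless cosmetic addition; the index bookkeeping you describe matches the paper's rearrangement exactly.
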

\begin{proof}[Proof of Lemma \ref{LEMMA_POSITIVE_DEPENDENCE}]
\begin{align*}
    &\sum_{i=1}^{n}\pr{\Gamma = \gamma_i \cond P_j \le t(\gamma_i)} \\
=   &\sum_{i=1}^{n}\pr{\Gamma \ge \gamma_i \cond P_j \le t(\gamma_i)} - \sum_{i=1}^{n-1}\pr{\Gamma \ge \gamma_{i+1} \cond P_j \le t(\gamma_i)} \\
=   & ~\pr{\Gamma \ge \gamma_1 \cond P_j \le t(\gamma_1)} - \sum_{i=2}^{n}\left[\pr{\Gamma \ge \gamma_i \cond P_j \le t(\gamma_{i-1})} - \pr{\Gamma \ge \gamma_i \cond P_j \le t(\gamma_i)}\right] \\
\le & ~\pr{\Gamma \ge \gamma_i \cond P_j \le t(\gamma_1)}.
\end{align*}
The inequality follows by Assumption \ref{ASM_POS_DEPENDENCE}.
\end{proof}

\begin{lemma} \label{LEMMA_ARBITRARY_DEPENDENCE}
Under arbitrary dependence of the $p$-values, if $\Gamma(P_1, \dots, P_m)$ is a discrete function of the $p$-values taking on values $\gamma_1 < \dots < \gamma_n$ and $t(\cdot)$ is a positive non-decreasing function on $\{\gamma_1, \ldots, \gamma_n\}$ with the convention that $t(\gamma_0) = 0$, then for each true null $H_j$,
\[
\sum_{i=1}^{n}\frac{1}{t(\gamma_i)}\pr{\Gamma = \gamma_i,  P_j \le t(\gamma_i)} \le \sum_{i=1}^{n}\frac{t(\gamma_i) - t(\gamma_{i-1})}{t(\gamma_i)}.
\]
\end{lemma}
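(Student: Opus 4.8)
The plan is to prove the inequality directly from the marginal super-uniformity assumption (\ref{EQN_UNIFORM}), without invoking any joint dependence structure (consistent with arbitrary dependence). For brevity write $u_i = t(\gamma_i)$, so that $0 = u_0 \le u_1 \le \cdots \le u_n$ and the left-hand side is $S = \sum_{i=1}^n \frac{1}{u_i}\pr{\Gamma = \gamma_i, P_j \le u_i}$. The engine of the argument is the telescoping identity
\[
\frac{1}{u_i} = \frac{1}{u_n} + \sum_{k=i}^{n-1}\left(\frac{1}{u_k} - \frac{1}{u_{k+1}}\right) = \frac{1}{u_n} + \sum_{k=i}^{n-1}\frac{u_{k+1}-u_k}{u_k u_{k+1}}.
\]
First I would substitute this identity into $S$ and split the resulting double sum into the $1/u_n$ part and the telescoping part. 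After interchanging the order of summation in the telescoping part (so that $k$ runs over $1,\dots,n-1$ and $i$ over $1,\dots,k$), both pieces reduce to controlling nested partial sums of the form $\sum_{i=1}^k \pr{\Gamma = \gamma_i, P_j \le u_i}$.

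The key estimate is that each such partial sum is bounded by $u_k$. This is precisely where (\ref{EQN_UNIFORM}) enters: the events $\{\Gamma = \gamma_i\}$ are disjoint (they correspond to distinct values of $\Gamma$), and for $i \le k$ each event $\{\Gamma = \gamma_i, P_j \le u_i\}$ is contained in $\{P_j \le u_k\}$ since $u_i \le u_k$; hence their union has probability at most $\pr{P_j \le u_k} \le u_k$ because $H_j$ is a true null. Applying this to the $1/u_n$ term yields a contribution of at most $u_n/u_n = 1$, and applying it inside the telescoping sum yields $\sum_{k=1}^{n-1}\frac{u_{k+1}-u_k}{u_k u_{k+1}}\,u_k = \sum_{k=1}^{n-1}\frac{u_{k+1}-u_k}{u_{k+1}}$. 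Reindexing this last sum and noting that the $i=1$ term of the target equals $(u_1 - u_0)/u_1 = 1$, the two pieces combine to exactly $\sum_{i=1}^n \frac{u_i - u_{i-1}}{u_i}$, which is the right-hand side.

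The main obstacle is recognizing the correct decomposition of $1/u_i$; once the telescoping identity is in hand, the remaining steps are routine bookkeeping (one interchange of summation order and a single application of the super-uniformity bound to the nested partial sums). It is worth emphasizing that, unlike Lemma \ref{LEMMA_POSITIVE_DEPENDENCE}, this argument uses only the marginal condition (\ref{EQN_UNIFORM}) together with the disjointness of the events $\{\Gamma = \gamma_i\}$, and never appeals to any joint dependence among the $p$-values — which is exactly why the conclusion is valid under arbitrary dependence.
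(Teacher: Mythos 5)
Your proof is correct, but it is organized differently from the paper's. The paper performs summation by parts on the $\Gamma$-index: it rewrites $\pr{\Gamma = \gamma_i, P_j \le t(\gamma_i)}$ as a difference of tail probabilities $\pr{\Gamma \ge \gamma_i, \cdot} - \pr{\Gamma \ge \gamma_{i+1}, \cdot}$, uses the monotonicity of $1/t(\cdot)$ to pass to the increments $\pr{\Gamma \ge \gamma_i,\ t(\gamma_{i-1}) < P_j \le t(\gamma_i)}$, discards the $\Gamma$ event, and then needs a second Abel summation before invoking super-uniformity. You instead telescope the coefficient $1/t(\gamma_i)$ and reduce everything to the single estimate $\sum_{i=1}^{k}\pr{\Gamma = \gamma_i, P_j \le t(\gamma_i)} \le \pr{P_j \le t(\gamma_k)} \le t(\gamma_k)$, which follows from the disjointness of the events $\{\Gamma = \gamma_i\}$ together with the containment $\{P_j \le t(\gamma_i)\} \subseteq \{P_j \le t(\gamma_k)\}$ for $i \le k$. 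The two arguments are dual forms of the same summation-by-parts and use exactly the same hypotheses (only marginal super-uniformity of $P_j$, hence validity under arbitrary dependence), but yours isolates the probabilistic content in one clean union-bound step rather than spreading it over two passes; it also makes transparent exactly where positivity and monotonicity of $t$ are used. Your final bookkeeping checks out: the $1/t(\gamma_n)$ term contributes $1 = (t(\gamma_1)-t(\gamma_0))/t(\gamma_1)$ and the telescoping part contributes $\sum_{i=2}^{n}(t(\gamma_i)-t(\gamma_{i-1}))/t(\gamma_i)$, matching the stated right-hand side.
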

\begin{proof}[Proof of Lemma \ref{LEMMA_ARBITRARY_DEPENDENCE}]
Using the convention that $0/0 = 0$, we have
\begin{align*}
    &~\sum_{i=1}^{n}\frac{1}{t(\gamma_i)}\pr{\Gamma = \gamma_i, P_j \le t(\gamma_i)} \\
=   &~\sum_{i=1}^{n}\left[\frac{1}{t(\gamma_i)}\pr{\Gamma \ge \gamma_i, P_j \le t(\gamma_i)} - \frac{1}{t(\gamma_{i-1})}\pr{\Gamma \ge \gamma_i, P_j \le t(\gamma_{i-1})}\right] \\
\le &~\sum_{i=1}^{n}\frac{1}{t(\gamma_i)}\pr{\Gamma \ge \gamma_i, t(\gamma_{i-1}) < P_j \le t(\gamma_i)} \\
\le &~\sum_{i=1}^{n}\frac{1}{t(\gamma_i)}\pr{t(\gamma_{i-1}) < P_j \le t(\gamma_i)} \\
=   &~\sum_{i=1}^{n-1}\left(\frac{1}{t(\gamma_i)} - \frac{1}{t(\gamma_{i+1})}\right) \pr{P_j \le t(\gamma_i)} + \frac{1}{t(\gamma_n)}\pr{P_j \le t(\gamma_n)} \\
\le &~\sum_{i=1}^{n}\frac{t(\gamma_i) - t(\gamma_{i-1})}{t(\gamma_i)}. ~\qedhere
\end{align*}
\end{proof}

\subsection{Proof of Proposition \ref{PROP_ALGORITHM}}
Assume $k > \psi(k)$. Then, step 2(a) of Algorithm 3.1 is repeated until for some $\ell \ge 2$, $r_\ell \le \psi(r_\ell)$. For $t = 1, \dots, \ell-1$, we have $r_t > \psi(r_t)$ implying $r_t > r_{t+1}$. Thus, $r_\ell < r_1 = k$. For any integer $r$ from 0 to $k-1$ such that $r \le \psi(r)$, we will show that $r_\ell \ge r$. To prove it, we show using induction that $r_t \ge r, t = 1, \dots, \ell$. Since $r_1 = k > r$, by induction assume $r_{t-1} \ge r$. Then, $r_t = \psi(r_{t-1}) \ge \psi(r) \ge r$. Since $r_\ell \le \psi(r_\ell)$, $r_\ell < k$, and $r_\ell \ge r$, we have $R = r_\ell = \max\{0 \le r \le k-1: r \le \psi(r)\}$.

Conversely, assume $k \le \psi(k)$. Then, step 2(b) is repeated until for some $\ell \ge 2$, $r_\ell > \psi(r_\ell)$. For $t = 1, \dots, \ell-1$, we have $r_t \le \psi(r_t)$ implying $r_t < \psi(r_t)+1 = r_{t+1}$. Thus, $r_\ell > r_1 = k$. For any integer $r$ from $k+1$ to $m+1$ such that $r > \psi(r)$, we will show that $r_\ell \le r$. To prove it, we show using induction that $r_t \le r, t = 1, \dots, \ell$. Since $r_1 = k < r$, by induction assume $r_{t-1} \le r$. Then, $r_t = \psi(r_{t-1})+1 \le \psi(r)+1 \le r$. Since $r_\ell > \psi(r_\ell)$, $r_\ell > k$, and $r_\ell \le r$, we have $R = r_\ell-1 = \min\{k+1 \le r \le m+1 : r > \psi(r)\}-1$. \qed

\subsection{Proof of Theorem \ref{THM_HIER_FDR_POS_DEP}}
In this proof and the remaining proofs, we will use the convention that $0/0 = 0$. For convenience of notation, define $\mathcal{G}_d = \bigcup_{j=1}^{d}\mathcal{F}_j$\label{NOTATION_G} and $R(\mathcal{G}_d)$ is the number of rejections in the first $d$ families, $\mathcal{F}_j, j = 1, \dots, d$. Let $|\mathcal{G}_d|$ be the cardinality of $\mathcal{G}_d$.

We will show that
\begin{equation}
\expv{\frac{\V{\mathcal{M}_i}}{R}} \le \frac{\ell_i \alpha}{\ell}, i = 1, \dots, m. \label{EQN_POS_DEP_1}
\end{equation}
\emph{Proof of (\ref{EQN_POS_DEP_1}).}
The event $\{H_i \text{ is rejected}\}$ implies all ancestors of $H_i$ are rejected so there must be at least $d_i$ rejections in the first $d_i$ families. Therefore, the event $\{H_i \text{ is rejected}\}$ implies the following two inequalities:
\begin{align}
&d_i \le \Rd \le \cardgdi, \label{EQN_POS_DEP_4} \\
&\Rd-1 \le R - \R{\mathcal{M}_i}. \label{EQN_POS_DEP_5}
\end{align}
The second inequality follows from the fact that $\mathcal{G}_{d_i} / \{H_i\} \subseteq \mathcal{M}/\mathcal{M}_i$ so that $R(\mathcal{G}_{d_i} / \{H_i\}) \le R(\mathcal{M}/\mathcal{M}_i)$.

If $H_i$ is true,
\begin{eqnarray}
&    &\expv{\frac{\V{\mathcal{M}_i}}{R}} \nonumber
 \le  \expv{\frac{\V{\mathcal{M}_i}}{\V{\mathcal{M}_i} + R - \R{\mathcal{M}_i}}} \nonumber \\
&\le &\expv{\frac{m_i}{m_i + R - \R{\mathcal{M}_i}}\ind{H_i \text{ is rejected}}} \nonumber \\
& \le &   \expv{\frac{m_i}{m_i + \Rd-1}\ind{H_i \text{ is rejected}}} \nonumber \\
&=   &\csum{r = d_i}{\cardgdi}\expv{\frac{m_i}{m_i + r -1}\ind{\Rd = r, H_i \text{ is rejected}}} \nonumber \\
& \le & \csum{r = d_i}{\cardgdi}\frac{m_i}{m_i + r-1}\pr{\Rd = r, P_i \le \alpha_i(r)} \nonumber \\
&\le &\csum{r = d_i}{\cardgdi}\frac{m_i \alpha_i(r)}{m_i + r-1}\pr{\Rd = r \cond P_i \le \alpha_i(r)} \nonumber \\
& = &   \frac{\ell_i\alpha}{\ell}\csum{r = d_i}{\cardgdi}\pr{\Rd = r \cond P_i \le \alpha_i(r)}. \label{EQN_POS_DEP_2}
\end{eqnarray}
The second inequality follows from the fact that $\V{\mathcal{M}_i} \le m_i$ and $\V{\mathcal{M}_i} / (\V{\mathcal{M}_i} + R - \R{\mathcal{M}_i})$ is an increasing function of $\V{\mathcal{M}_i}$. The third inequality follows from (\ref{EQN_POS_DEP_5}) and the first equality follows from (\ref{EQN_POS_DEP_4}). The fourth inequality follows by the fact that the event $\{H_i \text{ is rejected}\} = \{H_{T(i)} \text{ is rejected}, P_i \le \alpha_i(\Rd)\}$.

Since the number of rejections by the generalized stepup procedure is a coordinatewise non-increasing function of the $p$-values, it follows that $\Rd$ is also a coordinatewise non-increasing function of the $p$-values. Therefore, by Lemma \ref{LEMMA_POSITIVE_DEPENDENCE},
\begin{eqnarray}
&    &\csum{r = d_i}{\cardgdi}\pr{\Rd = r \cond P_i \le \alpha_i(r)} \le \pr{\Rd \ge d_i \cond P_i \le \alpha_i(d_i)} \le 1. \label{EQN_POS_DEP_3}
\end{eqnarray}
From (\ref{EQN_POS_DEP_3}), we have that (\ref{EQN_POS_DEP_2}) is less than $\ell_i\alpha/\ell$. Thus, (\ref{EQN_POS_DEP_1}) holds when $H_i$ is true.

We will use induction to show (\ref{EQN_POS_DEP_1}) also holds when $H_i$ is false. When $H_i$ is a false null leaf hypothesis, then (\ref{EQN_POS_DEP_1}) is true trivially. Otherwise, assume (\ref{EQN_POS_DEP_1}) is true for every false child hypothesis of $H_i$. Thus, (\ref{EQN_POS_DEP_1}) is true for  all children of $H_i$. We note that when $H_i$ is false, $\V{\mathcal{M}_i} = \sum_{j:T(j) = i}\V{\mathcal{M}_j}$ and
\[
\expv{\frac{\V{\mathcal{M}_i}}{R}} = \csum{j:T(j) = i}{}\expv{\frac{\V{\mathcal{M}_j}}{R}} \le \csum{j:T(j) = i}{}\frac{\ell_j\alpha}{\ell} = \frac{\ell_i \alpha}{\ell}.
\]
Thus, (\ref{EQN_POS_DEP_1}) holds for all true and false null hypotheses.

\vspace{5pt}
\noindent\emph{Proof of Theorem \ref{THM_HIER_FDR_POS_DEP}.} By (\ref{EQN_POS_DEP_1}), we have
\[
\text{FDR} = \csum{i:T(i) = 0}{}\expv{\frac{\V{\mathcal{M}_i}}{R}} \le \csum{i:T(i) = 0}{}\frac{\ell_i \alpha}{\ell} = \alpha. ~\qed
\]

\subsection{Proof of Theorem \ref{THM_HIER_FDR_ARB_DEP}}
We will show that (\ref{EQN_POS_DEP_1}) holds under arbitrary dependence for the procedure introduced in Theorem \ref{THM_HIER_FDR_ARB_DEP}.

\vspace{10pt}
\noindent\emph{Proof of (\ref{EQN_POS_DEP_1}).} When $H_i$ is true, by the fourth inequality of (\ref{EQN_POS_DEP_2}), we have
\begin{eqnarray*}
&& \expv{\frac{\V{\mathcal{M}_i}}{R}}
\le  \csum{r = d_i}{\cardgdi}\frac{m_i}{m_i + r-1}\pr{\Rd = r, P_i \le \alpha_i(r)} \\
&=& \frac{\ell_i\alpha}{\ell}\frac{1}{c_i}\csum{r = d_i}{\cardgdi}\frac{1}{\alpha_i(r)}\pr{\Rd = r, P_i \le \alpha_i(r)} \\
& \le &  \frac{\ell_i\alpha}{\ell}\frac{1}{c_i}\left(1 + \csum{r = d_i+1}{\cardgdi}\frac{\alpha_i(r) - \alpha_i(r-1)}{\alpha_i(r)}\right) \\
&=& \frac{\ell_i\alpha}{\ell}.
\end{eqnarray*}
The second inequality follows by Lemma \ref{LEMMA_ARBITRARY_DEPENDENCE}. Thus, (\ref{EQN_POS_DEP_1}) holds when $H_i$ is true. When $H_i$ is false, (\ref{EQN_POS_DEP_1}) also holds by the same argument used in the proof of Theorem \ref{THM_HIER_FDR_POS_DEP}. Hence, (\ref{EQN_POS_DEP_1}) holds for all hypotheses.

\vspace{10pt}
\noindent\emph{Proof of Theorem \ref{THM_HIER_FDR_ARB_DEP}.} Since (\ref{EQN_POS_DEP_1}) holds for each $i = 1, \dots, m$, FDR control follows by the same argument used in the proof of Theorem \ref{THM_HIER_FDR_POS_DEP}. \qed

\subsection{Proof of Theorem \ref{THM_HIER_FDR_BLOCK_POS}} Recursively define the random variables $A_1, \dots, A_m$ as follows:
\begin{align*}
A_i =
\begin{cases}
1 &\text{ $T(i) = 0$}, \\
A_{T(i)} &\text{$T(i) \neq 0$ and $H_{T(i)}$ is false}, \\
A_{T(i)}(1 - (1/R({\mathcal{G}_{d_{T(i)}}}))\ind{H_{T(i)} \text{ is rejected}} &\text{$T(i) \neq 0$ and $H_{T(i)}$ is true.}
\end{cases}
\end{align*}
Notice that $A_i$ is a function of the $p$-values corresponding to the hypotheses in families $\mathcal{F}_1, \dots, \mathcal{F}_{d_i-1}$ so that $P_i$ and $A_i$ are independent due to Assumption \ref{ASM_BLOCK_DEPENDENCE}.

When $H_i$ is a true null hypothesis, we have the following useful inequality
\begin{equation}
\expv{A_i \frac{\ind{H_i \text{ is rejected}}}{\alpha_i(\Rd)}} \le \expv{A_i}. \label{EQN_FDR_INDEP_3}
\end{equation}
\emph{Proof of (\ref{EQN_FDR_INDEP_3}).} With the convention $R(\mathcal{G}_0) = 0$, we have
\begin{eqnarray*}
   &  &\expv{A_i \frac{\ind{H_i \text{ is rejected}}}{\alpha_i(\Rd)}}
 =     \expv{A_i \frac{\ind{H_{T(i)} \text{ is rejected}, P_i \le \alpha_i(R(\mathcal{G}_{d_{i}}))}}{\alpha_i(R(\mathcal{G}_{d_{i}}))}} \nonumber \\
&=    &\expv{A_i \frac{\ind{H_{T(i)} \text{ is rejected}, P_i \le \alpha_i(R(\mathcal{G}_{d_i-1} + R(\mathcal{F}_{d_i}))}}{\alpha_i(R(\mathcal{G}_{d_i-1}) + R(\mathcal{F}_{d_i}))}} \nonumber \\
&\le  &\expv{A_i \sum_{r = 1}^{\cardfdi} \expv{\frac{\ind{P_i \le \alpha_i(R(\mathcal{G}_{d_i-1}) + r), R(\mathcal{F}_{d_i}) = r}}{\alpha_i(R(\mathcal{G}_{d_i-1}) + r)} \cond R(\mathcal{G}_{d_i-1}), A_i}} \nonumber \\
&\le  &\expv{A_i \sum_{r = 1}^{\cardfdi} \pr{R(\mathcal{F}_{d_i}) = r \cond P_i \le \alpha_i(R(\mathcal{G}_{d_i-1}) + r), R(\mathcal{G}_{d_i-1}), A_i}} \nonumber \\
&\le  &\expv{A_i \pr{R(\mathcal{F}_{d_i}) \ge 1 \cond P_i \le \alpha_i(R(\mathcal{G}_{d_i-1}) + 1), R(\mathcal{G}_{d_i-1}), A_i}} \nonumber \\
&\le  &\expv{A_i}.
\end{eqnarray*}
The first equality follows by the fact that the event $\{H_i \text{ is rejected}\}$ is equivalent to the event $\{H_{T(i)} \text{ is rejected}, P_i \le \alpha_i(\Rd)\}$.  The second inequality follows by the fact that $P_i$ is independent of $R(\mathcal{G}_{d_i-1})$ and $A_i$ due to Assumption \ref{ASM_BLOCK_DEPENDENCE} so that (\ref{EQN_UNIFORM}) still holds. Finally, the third inequality is due to Lemma \ref{LEMMA_POSITIVE_DEPENDENCE}.

\vspace{10pt}
Now, we will show that
\begin{equation}
\expv{A_i \frac{V(\mathcal{M}_i)}{R}} \le \frac{\ell_i \alpha}{\ell}\expv{A_i}. \label{EQN_FDR_INDEP_1}
\end{equation}
\emph{Proof of (\ref{EQN_FDR_INDEP_1}).}
If $H_i$ is a false null leaf hypothesis, then the left hand side of (\ref{EQN_FDR_INDEP_1}) is 0. If $H_i$ is a true null leaf hypothesis, then
\begin{eqnarray*}
&& \expv{A_i \frac{V(\mathcal{M}_i)}{R}}
 \le  \expv{A_i \frac{\ind{H_i \text { is rejected}}}{\Rd}} \\
 & = &    \frac{\ell_i \alpha}{\ell}\expv{A_i \frac{\ind{H_i \text { is rejected}}}{\alpha_i(\Rd)}}
 \le  \frac{\ell_i \alpha}{\ell}\expv{A_i}.
\end{eqnarray*}
The first inequality follows by the fact that $\Rd \le R$ and $\V{\mathcal{M}_i} = \ind{H_i \text{ is rejected}}$ when $H_i$ is a true null leaf hypothesis. The equality follows by $\alpha_i(r) = r\alpha/\ell$ and $\ell_i = 1$. The second inequality follows by (\ref{EQN_FDR_INDEP_3}). Thus, (\ref{EQN_FDR_INDEP_1}) holds when $H_i$ is a leaf hypothesis.

Now, we will show that (\ref{EQN_FDR_INDEP_1}) holds when $H_i$ is a non-leaf hypothesis. By induction assume (\ref{EQN_FDR_INDEP_1}) holds for all children of $H_i$. If $H_i$ is false, then we note that $V(\mathcal{M}_i) = \csum{j:T(j) = i}{} V(\mathcal{M}_j)$ and $A_i = A_j$ for each $j$ such that $T(j) = i$. Thus,
\begin{eqnarray*}
& &     \expv{A_i \frac{V(\mathcal{M}_i)}{R}}
=    \expv{\sum_{j: T(j) = i}A_j\frac{V(\mathcal{M}_j)}{R}} \\
& \le &  \sum_{j: T(j) = i}\frac{\ell_j\alpha}{\ell}\expv{A_j}
=    \frac{\ell_i \alpha}{\ell}\expv{A_i}.
\end{eqnarray*}
The inequality follows by induction.

Now, assume $H_i$ is true. We will use the following inequality,
\begin{eqnarray}
 \frac{1}{R} & = &  \frac{1}{\Rd} - \frac{R-\Rd}{R \Rd} \nonumber \\
& \le & \frac{1}{\Rd} - \sum_{j:T(j) = i}\frac{R(\mathcal{M}_j)}{R \Rd} \nonumber \\
& \le & \frac{1}{\Rd} - \sum_{j:T(j) = i}\frac{V(\mathcal{M}_j)}{R \Rd}. \label{EQN_FDR_INDEP_4}
\end{eqnarray}
The equality follows by simple algebra and the first inequality follows by the fact that $\mathcal{M}_j \subseteq \mathcal{M} / \mathcal{G}_{d_i}$ for each $j$ with $T(j) = i$ so that $\sum_{j:T(j) = i}R(\mathcal{M}_j) \le R-\Rd$. The second inequality follows by the fact that
$R(\mathcal{M}_j) \ge V(\mathcal{M}_j)$ for each $j$. It should also be noted that $V(\mathcal{M}_i) = (1 + \sum_{j:T(j) = i}V(\mathcal{M}_j))\ind{H_i \text{ is rejected}}$. Thus,
\begin{eqnarray*}
   &&\expv{A_i\frac{V(\mathcal{M}_i)}{R}} \\
&=   &\expv{A_i\left(\frac{1}{R} + \sum_{j:T(j) = i}\frac{V(\mathcal{M}_j)}{R}\right)\ind{H_i \text{ is rejected}}} \\
&\le &\expv{A_i\left(\frac{1}{\Rd} - \sum_{j:T(j) = i}\frac{V(\mathcal{M}_j)}{R\Rd} + \sum_{j:T(j) = i}\frac{V(\mathcal{M}_j)}{R}\right)\ind{H_i \text{ is rejected}}} \\
&=   &\expv{A_i\left(\frac{1}{\Rd} + \left(1 - \frac{1}{\Rd}\right)\sum_{j:T(j) = i}\frac{V(\mathcal{M}_j)}{R}\right)\ind{H_i \text{ is rejected}}} \\
&=   &\expv{A_i\frac{\ind{H_i \text{ is rejected}}}{\Rd} + \sum_{j:T(j) = i}A_j\frac{V(\mathcal{M}_j)}{R}} \\
&\le &\expv{A_i\frac{\ind{H_i \text{ is rejected}}}{\Rd} + \sum_{j:T(j) = i}\frac{\ell_j \alpha}{\ell}A_j} \\
&=   &\expv{A_i\left(\frac{1}{\Rd} + \left(1 - \frac{1}{\Rd}\right)\frac{\ell_i \alpha}{\ell}\right)\ind{H_i \text{ is rejected}}} \\
&=   &\frac{\ell_i\alpha}{\ell}\expv{A_i\frac{\ind{H_i \text{ is rejected}}}{\alpha_i(\Rd)}} \\
&\le &\frac{\ell_i\alpha}{\ell}\expv{A_i}.
\end{eqnarray*}
The first inequality follows by (\ref{EQN_FDR_INDEP_4}). The third and forth equality follow by the fact that $A_j = A_i(1 - (1/\Rd)\ind{H_i \text{ is rejected}}$ for $j$ such that $T(j) = i$, since $H_i$ is true. The second inequality follows by induction. The last equality follows by $\alpha_i(r) = \ell_i r \alpha/(\ell+\ell_i(r-1)\alpha)$ and the last inequality follows by (\ref{EQN_FDR_INDEP_3}).

\vspace{10pt}
\noindent\emph{Proof of Theorem \ref{THM_HIER_FDR_BLOCK_POS}.} Finally, by (\ref{EQN_FDR_INDEP_1}),
\begin{eqnarray*}
& & \text{FDR} = \expv{\frac{V}{R}} = \sum_{i:T(i) = 0}\expv{\frac{V(\mathcal{M}_i)}{R}} \\
&=& \sum_{i:T(i) = 0}\expv{A_i\frac{V(\mathcal{M}_i)}{R}} \le \sum_{i:T(i) = 0}\frac{\ell_i \alpha}{\ell} = \alpha. ~\qed
\end{eqnarray*}

\subsection{Proof of Theorem \ref{THM_HIER_FDR_BLOCK_ARB}}
In the proof, we will use the same notations as in the proof of Theorem \ref{THM_HIER_FDR_BLOCK_POS}. To prove Theorem \ref{THM_HIER_FDR_BLOCK_ARB}, we show that the following inequality holds when $H_i$ is true
\begin{equation}
\expv{A_i \frac{\ind{H_i \text{ is rejected}}}{\alpha_i(\Rd)}} \le c_i\expv{A_i}. \label{EQN_FDR_INDEP_5}
\end{equation}

\emph{Proof of (\ref{EQN_FDR_INDEP_5}).} It can be easily shown through simple algebra that for any leaf or non-leaf hypothesis $H_i$, the constant $c_i$ can be expressed as
\begin{equation}\label{EQN_Const}
c_i = 1 + \sum_{r=2}^{\cardfdi}\frac{\alpha_i(r+d_i-1) - \alpha_i(r+d_i-2)}{\alpha_i(r+d_i-1)}.
\end{equation}
Assume $H_i$ is true. Then,
\begin{eqnarray*}
   &  &\expv{A_i \frac{\ind{H_i \text{ is rejected}}}{\alpha_i(\Rd)}} \\
&=    &\expv{A_i \frac{\ind{P_i \le \alpha_i(R(\mathcal{G}_{d_{i}})), H_{T(i)} \text{ is rejected}}}{\alpha_i(R(\mathcal{G}_{d_i}))}} \nonumber \\
&=    &\expv{A_i \frac{\ind{P_i \le \alpha_i(R(\mathcal{G}_{d_i-1}) + R(\mathcal{F}_{d_i})), H_{T(i)} \text{ is rejected}}}{\alpha_i(R(\mathcal{G}_{d_i-1}) + R(\mathcal{F}_{d_i}))}} \nonumber \\
&=    &E\Bigg(A_i\ind{H_{T(i)} \text{ is rejected}} \times  \\
&     &\hspace{25pt} \left.\expv{\sum_{r = 1}^{\cardfdi}\frac{\ind{P_i \le \alpha_i(R(\mathcal{G}_{d_i-1}) + r), R(\mathcal{F}_{d_i}) = r}}{\alpha_i(R(\mathcal{G}_{d_i-1}) + r)} \cond \mathbf{P}_{d_i -1}}\right) \nonumber \\
&=    &E\Bigg(A_i\ind{H_{T(i)} \text{ is rejected}} \times  \\
&    &\hspace{25pt}\left.\sum_{r = 1}^{\cardfdi}\frac{\pr{P_i \le \alpha_i(R(\mathcal{G}_{d_i-1}) + r), R(\mathcal{F}_{d_i}) = r \cond \mathbf{P}_{d_i -1}}}{\alpha_i(R(\mathcal{G}_{d_i-1}) + r)}\right) \nonumber \\
&\le  &\expv{A_i \ind{H_{T(i)} \text{ is rejected}} \left(1 + \sum_{r = 2}^{\cardfdi}\frac{\alpha_i(R(\mathcal{G}_{d_i-1}) + r) - \alpha_i(R(\mathcal{G}_{d_i-1}) + r-1)}{\alpha_i(R(\mathcal{G}_{d_i-1}) + r)}\right)} \nonumber \\
&\le  &\expv{A_i \ind{R(\mathcal{G}_{d_i-1}) \ge d_i-1} \left(1 + \sum_{r = 2}^{\cardfdi}\frac{\alpha_i(R(\mathcal{G}_{d_i-1}) + r) - \alpha_i(R(\mathcal{G}_{d_i-1}) + r-1)}{\alpha_i(R(\mathcal{G}_{d_i-1}) + r)}\right)} \nonumber \\
&\le  &\expv{A_i \left(1 + \sum_{r = 2}^{\cardfdi}\frac{\alpha_i(r + d_i -1) - \alpha_i(r + d_i - 2)}{\alpha_i(r + d_i - 1)}\right)} \nonumber \\
&=    &c_i \expv{A_i}.
\end{eqnarray*}
Here, $\mathbf{P}_{d_i -1}$ denotes the $p$-value vector consisting of the $p$-values corresponding to the hypotheses in the first $d_i-1$ families, $\mathcal{F}_1, \dots, \mathcal{F}_{d_i-1}$.
The first equality follows by the fact that the event $\{H_i \text{ is rejected}\}$ is equivalent to the event  $\{H_{T(i)} \text{ is rejected}, P_i \le \alpha_i(\Rd)\}$. The first inequality follows by Lemma \ref{LEMMA_ARBITRARY_DEPENDENCE} and the fact that $P_i$ is independent of $\mathbf{P}_{d_i -1}$ due to Assumption \ref{ASM_BLOCK_DEPENDENCE} and $R(\mathcal{G}_{d_i-1})$ is determined by
$\mathbf{P}_{d_i -1}$. The second inequality follows by the fact that the event $\{H_{T(i)} \text{ is rejected}\}$ implies all ancestors of $H_{T(i)}$ are rejected, so there must be at least $d_i-1$ rejections in the first $d_i-1$ families, i.e., $R(\mathcal{G}_{d_i-1}) \ge d_i - 1$. The third inequality follows by the fact that $[\alpha_i(R(\mathcal{G}_{d_i-1}) + r) - \alpha_i(R(\mathcal{G}_{d_i-1}) + r-1)]/\alpha_i(R(\mathcal{G}_{d_i-1}) + r)$ is a decreasing function of $R(\mathcal{G}_{d_i-1})$ for each given $r$. The last equality follows from (\ref{EQN_Const}).

\vspace{10pt}
\noindent\emph{Proof of Theorem \ref{THM_HIER_FDR_BLOCK_ARB}.} By using the same argument for the proof of (\ref{EQN_FDR_INDEP_1}), we have that
$$\expv{A_i V(\mathcal{M}_i)/R} \le \ell_i\alpha\expv{A_i}/\ell.$$
Thus, the FDR control of this procedure follows by the same argument used in the proof of Theorem \ref{THM_HIER_FDR_BLOCK_POS}. \qed

\def\bibfont{\small}
\bibliographystyle{ECA_jasa}
\bibliography{references}

\end{document}